\documentclass[journal]{IEEEtran}

\usepackage{amsmath}
\usepackage{amsfonts}
\usepackage{amsthm}
\usepackage{amssymb}
\usepackage{algorithm}
\usepackage{array}
\usepackage{booktabs}
\usepackage{textcomp}
\usepackage{stfloats}
\usepackage{url}
\usepackage{verbatim}
\usepackage{graphicx}
\usepackage{balance}
\usepackage{xcolor}
\usepackage{color} 
\usepackage{nicefrac}
\usepackage{enumitem}
\usepackage{cases}
\usepackage{dsfont}
\usepackage[hidelinks]{hyperref}
\usepackage{romannum}
\usepackage{nomencl}
\usepackage{caption}
\usepackage{xpatch}
\usepackage{float}
\DeclareGraphicsExtensions{.pdf}

\newtheorem{assumption}{Assumption}
\newtheorem{proposition}{Proposition}
\newtheorem{lemma}{Lemma}

\theoremstyle{definition}
\newtheorem{remark}{Remark}
\newtheoremstyle{step}
  {\topsep}
  {\topsep}
  {}
  {}
  {\itshape}
  {.}
  {.3em}
  {\thmname{#1}\thmnumber{ #2}\thmnote{ (#3)}}
\theoremstyle{step}

\makenomenclature
\usepackage{etoolbox}
\renewcommand{\nomgroup}[1]{
\ifthenelse{\equal{#1}{A}}{\item[\textbf{Abbreviations and sets}]}{}
\ifthenelse{\equal{#1}{P}}{\item[\textbf{Parameters}]}{}
\ifthenelse{\equal{#1}{V}}{\item[\textbf{Variables}]}{}
}

\DeclareMathAlphabet{\mathbbb}{U}{bbold}{m}{n}

\newcommand{\col}{\operatorname{col}}
\newcommand{\diag}{\operatorname{diag}}
\newcommand{\proj}{\operatorname{proj}}
\newcommand{\argmin}{\operatorname{argmin}}
\newcommand{\N}{\operatorname{N}}
\newcommand{\xf}{x_{\rm tot}}
\newcolumntype{M}[1]{>{\centering\arraybackslash}m{#1}}
\usepackage{colortbl}

\makeatletter
\xpatchcmd{\proof}{\topsep6\p@\@plus6\p@\relax}{}{}{}
\makeatother

\begin{document}

\title{A Network-Constrained Demand Response Game for Procuring Energy Balancing Services}
\author{Xiupeng Chen, Koorosh Shomalzadeh, Jacquelien M. A. Scherpen, and Nima Monshizadeh
\thanks{Xiupeng Chen, Koorosh Shomalzadeh, Jacquelien M. A. Scherpen, and Nima Monshizadeh are with Jan C. Willems Center for Systems and Control, ENTEG, University of Groningen, Groningen, 9747 AG, The Netherlands. (email: {\tt \{xiupeng.chen, k.shomalzadeh, j.m.a.scherpen, n.monshizadeh\}@rug.nl})}
}

\maketitle

\setlength\abovedisplayskip{1.5pt}
\setlength\belowdisplayskip{1.5pt}
\setlength\intextsep{1pt}
\setlength{\abovecaptionskip}{0pt}

\begin{abstract}
Securely and efficiently procuring energy balancing services in distribution networks remains challenging, especially within a privacy-preserving environment. This paper proposes a network-constrained demand response game, i.e., a Generalized Nash Game (GNG), to incentivize energy consumers to offer balancing services. Specifically, we adopt a supply function-based bidding method for our demand response problem, where a requisite load adjustment must be met. To ensure the secure operation of distribution networks, we incorporate physical network constraints, including line capacity and bus voltage limits, into the game formulation. In addition, we analytically evaluate the efficiency loss of this game. Previous approaches to steer energy consumers toward the Generalized Nash Equilibrium (GNE) of the game often necessitated sharing some private information, which might not be practically feasible or desired. To overcome this limitation, we propose a decentralized market clearing algorithm with analytical convergence guarantees, which only requires the participants to share limited, non-sensitive information with others. Numerical analyses illustrate that the proposed market mechanism exhibits a low market efficiency loss. Moreover, these analyses highlight the critical role of integrating physical network constraints. Finally, we demonstrate the scalability of our proposed algorithm by conducting simulations on the IEEE 33-bus and 69-bus test systems.
\end{abstract}

\begin{IEEEkeywords}
Demand response, generalized Nash game, distribution network, supply function bidding, generalized Nash equilibrium.
\end{IEEEkeywords}

\nomenclature[A,01]{\(\mathrm{GNG}\)}{Generalized Nash Game}
\nomenclature[A,02]{\(\mathrm{GNE}\)}{Generalized Nash Equilibrium}
\nomenclature[A,03]{\(\text{v-GNE}\)}{variational Generalized Nash Equilibrium}
\nomenclature[A,05]{\(\mathrm{DSO}\)}{Distribution System Operator}
\nomenclature[A,07]{\(\mathrm{PoA}\)}{Price of Anarchy}
\nomenclature[A,07]{\(\mathrm{LI}\)}{Lerner Index}
\nomenclature[A,07]{\(\mathrm{DWL}\)}{Deadweight Loss}
\nomenclature[A,08]{\(\mathcal{N},\mathcal{M}\)}{sets of active consumers and passive consumers }
\nomenclature[A,10]{\(\mathcal{B},\mathcal{L}\)}{sets of buses and lines}
\nomenclature[A,11]{\(\mathcal{B}_{\rm out}^b\)}{set of out-neighbors of bus $b$}
\nomenclature[A,12]{\(\mathcal{B}_{\rm in}^b\)}{set of in-neighbors of bus $b$}
\nomenclature[A,13]{\(\mathcal{M}^b\)}{set of passive consumers connected at bus $b$}
\nomenclature[A,14]{\(\mathcal{N}^b\)}{set of active consumers connected at bus $b$}

\nomenclature[P,01]{\(N,M\)}{numbers of active consumers and passive consumers}
\nomenclature[P,03]{\(B,L\)}{numbers of buses and lines}
\nomenclature[P,07]{\(\alpha,\kappa\)}{price sensitivity, Lipschitz constant}
\nomenclature[P,08]{\(\xf\)}{load adjustment requirement}
\nomenclature[P,09]{\(\hat{x}_n\)}{maximum available flexibility of active consumer $n$}
\nomenclature[P,11]{\(d_n\)}{pre-scheduled net load of consumer $n$}
\nomenclature[P,12]{\(u_{(b,s)}\)}{conductance of line $(b,s)$}
\nomenclature[P,13]{\(w_{(b,s)}\)}{susceptance of line $(b,s)$}
\nomenclature[P,14]{\(z_{(b,s)}\)}{maximum capacity of line $(b,s)$}
\nomenclature[P,15]{\(\underline{\theta},\overline{\theta}\)}{bounds of the voltage phase angles}
\nomenclature[P,16]{\(\underline{v},\overline{v}\)}{bounds of the voltage phase magnitudes}

\nomenclature[V,01]{\(\beta_n\)}{bid of active consumer $n$}
\nomenclature[V,02]{\(x_n\)}{flexibility of active consumer $n$}
\nomenclature[V,03]{\(\lambda\)}{uniform market clearing price}
\nomenclature[V,04]{\(p_b,q_b\)}{active and reactive power injections at bus $b$}
\nomenclature[V,05]{\(v_b,\theta_b\)}{voltage magnitude and angle of bus $b$}
\nomenclature[V,06]{\(p_{(b,s)}\)}{active power flow through line $(b,s)$}
\nomenclature[V,07]{\(q_{(b,s)}\)}{reactive power flow through line $(b,s)$}
\printnomenclature

\section{Introduction}
\IEEEPARstart{T}{he} increased integration of renewable energies brought an unprecedented challenge to cost-effectively balancing energy supply and demand \cite{sinsel2020challenges,silva2022short}. 
Advanced communication and metering technologies enable consumers to modify energy consumption in response to supply fluctuations, providing balancing services through demand response programs \cite{siano2014demand}. Due to scalability issues, these consumers typically participate in energy markets via entities, such as utility companies, aggregators, virtual power plants, etc. \cite{gkatzikis2013role, naval2021virtual}. 
These entities usually use centralized mechanisms to achieve the requisite balancing services \cite{correa2019optimal, sadeghi2018optimal, jiang2022flexibility}, which reduce incentives for consumers to offer balancing services by treating them as price-takers, despite their potential for strategic decision-making \cite{bahramara2017modeling}. Additionally, such mechanisms require entities to access market participants' private information to determine market equilibria. In light of these issues, 
this paper proposes a decentralized demand response game for utility companies to procure balancing services, considering consumers' self-interested behaviors and protecting their private information.

Game theory, both in the form of cooperative and noncooperative games, 
is a powerful tool for analyzing strategic demand response behaviors of energy consumers \cite{saad2012game, deng2015survey}. In cooperative games, participants establish agreements to attain a specific common goal, ensuring that the benefits derived from the collaboration are equitably distributed \cite{cui2015residential,sanjab2022tso}. 
However, enforcing these agreements is challenging and requires access to private information. In Stackelberg games, energy consumers remain price-takers, adjusting their loads in response to the price signals from utility companies \cite{aguiar2021network,pandey2021hierarchical}. The (generalized) Nash game is able to model the competitive behaviors among price-makers, addressing day-ahead energy scheduling problems through various billing methods \cite{mishra2022game,chen2014autonomous,hupez2022pricing}. 
In these studies \cite{aguiar2021network}-\cite{hupez2022pricing}, energy consumers use a quantity-based bidding method, with bids representing energy consumption quantities. This method suits demand response problems where consumers determine their consumption first, followed by energy providers meeting the total demand. However, this bidding method is challenging for problems requiring matching a predetermined energy supply, necessitating careful selection of billing parameters to ensure demand meets supply \cite{ma2014distributed}. 

The supply function bidding method is more flexible for demand response problems as it allows market players to adapt their quantities to varying prices, rather than committing to a fixed price or quantity \cite{klemperer1989supply}. While this method is predominantly used in the wholesale market \cite{hobbs2000strategic}, only a few pieces of literature employ it to depict the bidding behaviors of energy consumers in distribution networks. The most relevant works to ours \cite{li2015demand, xu2015demand} address the energy balancing problems by suggesting two distinct forms of supply function bids. The studies \cite{chen2019energy,chen2022energy} employ supply function-based bidding methods to handle energy-sharing problems. Although these market mechanisms \cite{li2015demand}-\cite{chen2022energy} present considerable advantages in terms of economic efficiency enhancements, they neglect the physical feasibility of the market equilibriums. Specifically, \cite{li2015demand, chen2019energy} do not take into account either local capacity limitations or physical network constraints, while the work \cite{xu2015demand} considers only local capacity constraints  \cite{xu2015demand}. The study \cite{chen2022energy} includes simplified DC power flow constraints but excludes local capacity constraints and bus voltage limits in distribution networks. In this paper, on the other hand, we incorporate both local capacity and physical network constraints, including line capacity and voltage limits, in an AC power flow model. Due to the aforementioned practical considerations, the clearing methods in \cite{li2015demand}-\cite{chen2022energy} are not readily applicable to our setup. Moreover, the presence of local capacity and physical network constraints substantially adds to the complexity of the corresponding game formulation and efficiency analysis.

We model the competition among energy consumers as a Generalized Nash Game (GNG), where the bid of each consumer is intertwined with complex physical network constraints. The market equilibrium is the Generalized Nash Equilibrium (GNE) where each consumer maximizes its profit, and none have an incentive to deviate unilaterally \cite{kulkarni2012variational}. The distributed iterative algorithms for equilibrium-seeking presented in \cite{mishra2022game}-\cite{hupez2022pricing} are not suitable for our problem, given their lack of consideration for coupling constraints. The Nikaido-Isoda function-based  algorithm is employed in \cite{li2022data}, which necessitates a central coordinator to gather potentially privacy-sensitive information from all participants. The energy sharing game is turned into an equivalent optimization problem to indirectly determine the GNE in \cite{wang2021distributed}. 
In addition, recent progress in the control systems community includes the development of advanced decentralized algorithms for computing GNE \cite{pavel2019distributed, yi2022survey}. Nonetheless, directly implementing these algorithms in our setup is challenging, mainly because they are best suited for problems with linear coupling constraints \cite{scarabaggio2021distributed,atzeni2014noncooperative}. 
For AC power flow models with nonlinear security constraints, the algorithm in \cite{belgioioso2022operationally} demands a specific structure for the objective functions of energy consumers, while the algorithm introduced in \cite{scarabaggio2022noncooperative} requires the coordinator to manage all coupling constraints. However, these coupling constraints might encompass sensitive data, such as local capacity constraints in our scenario. Given these algorithmic limitations, we develop a decentralized market clearing method for our game based on a preconditioned forward-backward splitting technique \cite{belgioioso2018projected}.

In this paper, we consider a demand response problem where the utility company procures energy balancing services from energy consumers to fulfill a load adjustment requirement. These active energy consumers are equipped with flexible resources and could modify their pre-scheduled energy consumption or generation in the day-ahead markets. 
On the one hand, these consumers behave rationally and bid strategically to maximize their profits; therefore, studying the competition among consumers and their optimal bidding strategies is crucial. On the other hand, energy consumers are physically connected within the distribution network. Thus, incorporating realistic physical network constraints is significant to ensure the secure operation of the distribution network.
In this regard, we propose a network-constrained demand response game for procuring balancing services in distribution networks, where energy consumers adopt the supply function-based bidding method. Then, we develop a decentralized market clearing method to steer consumers towards the market equilibrium. The main contributions of this work are as follows:
\begin{itemize}
    \item \textit{Game design with AC power flow:} The proposed demand response game integrates both local capacity constraints and AC power flow network constraints, including line capacity and bus voltage limits. In contrast to the supply function-based bidding methods in \cite{li2015demand}-\cite{chen2022energy}, the proposed method ensures the secure operation of the distribution network. Numerical results highlight the significance of incorporating such constraints in the design.
    \item \textit{Game equilibrium analysis:} Considering both local capacity and physical network constraints, we formulate the demand response game among energy consumers as a Generalized Nash Game (GNG). We characterize the market equilibrium, referred to as the variational Generalized Nash Equilibrium (v-GNE), which represents the best bidding strategies of the energy consumers in the demand response game. We prove the existence and uniqueness of this equilibrium. Moreover, the efficiency loss associated with the formulated game has been analytically evaluated and an upper bound is established. 
    \item \textit{Privacy-aware market clearance:} Our work introduces a decentralized market clearing algorithm designed for the demand response game, emphasizing privacy preservation amidst information sharing constraints. Compared to the centralized methods \cite{correa2019optimal, sadeghi2018optimal, jiang2022flexibility}, the proposed  algorithm ensures the confidentiality of private data, allowing only the exchange of non-sensitive information between market participants, even in the presence of complex and nonlinear coupling security constraints. We have analytically proven the algorithm's convergence and validated its scalability with a comprehensive case study. 
\end{itemize}

This paper is structured as follows. In Section \Romannum{2}, we present the system model for procuring energy balancing services in a distribution network. In Section \Romannum{3}, we formulate the competition among energy consumers as a game and evaluate the efficiency of its equilibrium. Section \Romannum{4} introduces the algorithm for computing the v-GNE in the game. Section~\Romannum{5} provides a comprehensive case study to demonstrate the effectiveness of the proposed market mechanism and clearing method. The paper closes with conclusions in Section \Romannum{6}.

\textit{Notation and preliminaries:} 
Let $\mathbb{R}$ and $\mathbb{R}^+$ be the sets of real numbers and nonnegtive real numbers. We use $\mathbbb{1}$($\mathbbb{0}$) to denote the vector/matrix with all elements equal to 1(0) and use $I$ as the identity matrix. Given a set $\mathcal{N}=\{1,2,...,N\}$, $\col(x_n)_{n\in\mathcal{N}}$ ($\diag(x_n)_{n\in\mathcal{N}}$) denotes the stacked vector (diagonal matrix) obtained from $x_n$. For a matrix $A\in \mathbb{R}^{N\times N}$, we use $\lambda_{\min}(A)$ ($\lambda_{\max}(A)$) to denote its minimum (maximum) eigenvalue. Given a closed convex set $\Omega\in\mathbb{R}_n$, we denote the projection of a point $x\in\mathbb{R}_n$ to the set $\Omega$ by ${\proj_{\Omega}(x)=\argmin_{y\in\Omega}\|y-x\|}$. We denote the normal cone operator by $\N_{\Omega}(x)$, i.e, $\N_{\Omega}(x)=\emptyset$ if $x\notin\Omega$, and $\N_{\Omega}(x)=\{v\in \mathbb{R}^n\mid \mathrm{sup}_{z\in\Omega}\langle v,z-x\rangle\leq 0\}$ otherwise. Given an operator $F:\mathcal{X}\rightarrow \mathbb{R}^n$, the variational inequality problem VI($\mathcal{X}$,F) is to find the point $\bar{x}\in \mathcal{X}$ such that $(x-\bar{x})^\top F(\bar{x})\geq 0,\,\forall \, x\in \mathcal{X}$. The operator $F$ is $\xi$-averaged, with $\xi\in(0,1)$, if $\|F(x)-F(y)\|^2\leq\|x-y\|^2-\frac{1-\xi}{\xi}\|(x-F(x))-(y-F(y))\|^2$. 

\section{System model}
\subsection{Market model}\label{balancing market model}
This paper considers a distribution network consisting of a utility company and two types of energy consumers, namely active and passive ones denoted by $\mathcal{N}=\{1,2,\dots,N\}$ and $\mathcal{M}=\{N+1,N+2,\dots,N+M\}$, respectively. An active consumer is equipped with at least one flexible resource, such as a dispatchable generator or an interruptable load, while passive consumers have fixed loads and no dispatchable generation units. In addition, the Distribution System Operator (DSO) ensures the secure operation of the distribution network. Fig.~\ref{fig:relationship} illustrates the physical and communication connection among the market participants in our demand response problem.

The utility company supplies energy for these energy consumers. Moreover, it can incentivize active consumers to adjust their energy consumption or generation to meet a load adjustment requirement $\xf\in\mathbb{R}^{+}$. The latter means the total amount of energy consumption scheduled in the day-ahead market should be reduced or increased by $\xf$ for real-time energy balancing. 
The value of $\xf$ can be determined in the upstream network via a price-based demand response or incentive-based demand response programs \cite{khajavi2011role}. Alternatively, $\xf$ can be a result of  prediction errors associated with renewable generation \cite{li2015demand}. In any case, the subsequent developments are independent of how $\xf$ is determined as long as the utility company is informed about its value. The goal of the utility company is to determine the allocation of flexibility $x_n$ (i.e., the extent of load adjustment) that each active consumer $n\in\mathcal{N}$ should provide to meet the total load adjustment requirement, that is
 \begin{equation}\label{eq_flexibility_balance}
    \sum_{n\in\mathcal{N}} x_n=\xf.
 \end{equation}
Note that the above demand response problem is different from the ones in \cite{mishra2022game}-\cite{hupez2022pricing}. These energy scheduling problems only consider how much energy the active consumers can consume or generate, while here we solve a demand response problem with the extra constraint \eqref{eq_flexibility_balance}, which makes the quantity-based bidding methods difficult to implement \cite{ma2014distributed}.
\begin{figure}[ht]
\begin{center}
\includegraphics[width=0.42\textwidth]{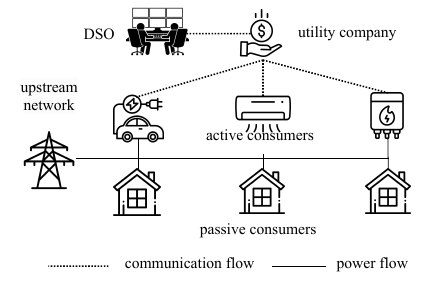}
\end{center}
\caption{The connection among market participants}
\label{fig:relationship}
\end{figure}  

Motivated by the scheme proposed in \cite{chen2019energy} to solve an energy sharing problem, we consider the bidding method and pricing mechanism for our demand response scheme as:

\begin{itemize}
\item Each active consumer submits its strategic decision variable $\beta_n$ to the utility company.

\item The utility company determines the flexibility price $\lambda$ that it pays to active consumers as
\begin{equation}\label{eq_price}
    \lambda=\frac{\xf-\mathbbb{1}^{\top}\beta}{\alpha N},
\end{equation}
and the corresponding allocated flexibility for active consumer $n$ as
\begin{equation}\label{eq_flexibility_allocation}
    x_n=\frac{\xf-\mathbbb{1}^{\top}\beta}{N}+\beta_n,
\end{equation}
where $\beta=\col(\beta_n)_{n\in\mathcal{N}}$, and $\alpha \in\mathbb{R}^+$ is a constant imposed by the utility company. 
Note that under the above scheme, the constraint \eqref{eq_flexibility_balance} always holds for any $\beta\in\mathbb{R}^N$.

\item The DSO is responsible for verifying that the allocated flexibilities $x_n$s meet the AC power flow network constraints as detailed in the next subsection. The active consumers' bids should be modified if the physical network constraints are not met. The modification method is provided in Section \ref{algorithm}.

\end{itemize}
\begin{remark} 
The above bidding method is essentially based on the supply function bidding method. Each active consumer $n \in \mathcal{N}$ submits a linear supply function in the form of $ x_n=\alpha \lambda+\beta_n$. The variable $\beta_n$ indicates consumer $n$'s willingness to provide flexibility, while the parameter $\alpha$ represents the overall price sensitivity of the market (for further details and explanations refer to \cite{chen2019energy}). Subsequently, after bid submission, the utility company clears the price $\lambda$ such that \eqref{eq_flexibility_balance} holds. There are other forms for a bidding function, such as $x_n=\beta_n \lambda$ in \cite{li2015demand} and $x_n=\hat{x}_n-\beta_n/\lambda$ in \cite{xu2015demand}, where $\hat{x}_n$ is the maximum flexibility capacity of consumer $n$. We compare the market efficiency performances of different bidding functions in Section~V. Furthermore, in contrast to the market models in \cite{li2015demand,xu2015demand,chen2019energy}, we incorporate the bid validation and modification phase conducted by the DSO under the AC power flow model. This integration necessitates distinct approaches to game formulation, efficiency analysis and decentralized algorithm design.
\end{remark}
\subsection{Physical network model}\label{subsec:network}
In this subsection, we elaborate on the AC power flow network constraints relevant to the bid validation and modification phase. First, we present the mathematical model of the distribution network, where all consumers are physically connected.
We represent the distribution network $\mathcal{D}(\mathcal{B},\mathcal{L})$ as a simple directed graph, where the buses and lines are represented by $\mathcal{B}$ and $\mathcal{L}\subseteq \mathcal{B} \times \mathcal{B}$, respectively. 
That is, if $\ell=(b,s) \in \mathcal{L}$, then line $\ell$ originates from bus $b\in \mathcal{B}$  and ends in bus $s \in \mathcal{B}$. The numbers of buses and lines are $B$ and $L$. Note that the direction of lines can be chosen arbitrarily. We denote the set of out-neighbors and in-neighbors of $b$ by 
\[
\mathcal{B}_{\rm out}^b= \{s \mid (b,s) \in \mathcal{L}\}, \quad \mathcal{B}_{\rm in}^b= \{s \mid (s,b) \in \mathcal{L}\},
\]
respectively. Furthermore, we define the incidence matrix $E=[e_{\ell,b}]\in\mathbb{R}^{L\times B}$ as 
\begin{equation}
    e_{\ell,b}=
    \begin{cases}
        +1 \quad  &\text{if line $\ell$ leaves bus $b$,} \\
        -1 \quad   &\text{if line $\ell$ enters bus $b$,} \\
        0 \quad &\text{otherwise}.
    \end{cases}
\end{equation}
\begin{figure}[ht!]
\begin{center}
\includegraphics[width=0.45\textwidth]{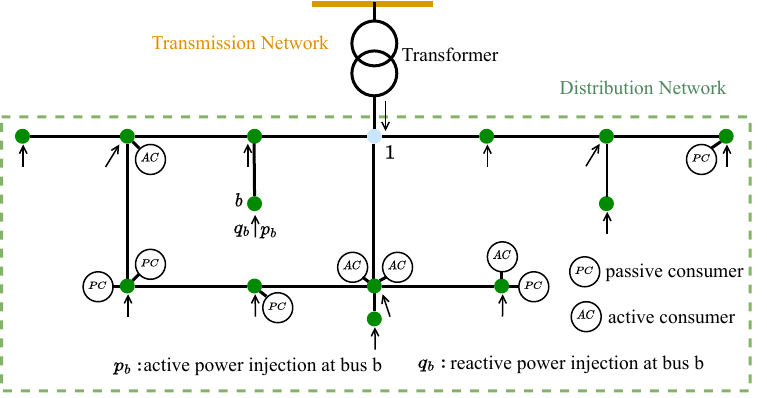}
\end{center}
\caption{An example of physical network}
\label{fig:physical}
\end{figure}

The transmission and distribution networks are connected via bus $1$. Besides, we use $\mathcal{M}^b$ and $\mathcal{N}^b$ to denote the set of passive and active consumers connected to each bus $b\in \mathcal{B}\setminus \{1\}$. Fig. \ref{fig:physical} depicts an example of distribution network.

By adopting the linear lossless power flow equations in \cite{yang2016state}, the active and reactive power flow $p_{(b,s)}$ and $q_{(b,s)}$   through line $(b,s)\in\mathcal{L}$ are given by
\begin{equation}\label{eq_line_balance}
    \begin{aligned}
      & p_{(b,s)} = -w_{(b,s)} (\theta_b-\theta_s) + u_{(b,s)} (v_b - v_s), \\
      & q_{(b,s)} = -u_{(b,s)} (\theta_b-\theta_s) - w_{(b,s)} (v_b - v_s), 
    \end{aligned}    
\end{equation}
where $u_{(b,s)}$ and $w_{(b,s)}$ denote the conductance and susceptance of line $(b,s)$, and $v_b$ and $\theta_b$ indicate the voltage magnitude and angle of bus $b\in \mathcal{B}$, respectively. 

For each bus $b\in\mathcal{B}$, the power balance must hold, namely
\begin{equation}\label{eq_bus_balance}
    \begin{aligned}
      &  \sum_{s\in \mathcal{B}^b_{ \rm out}} p_{(b,s)}- \sum_{s\in \mathcal{B}^b_{ \rm in}} p_{(s,b)}  = p_b, \\
      & \sum_{s\in \mathcal{B}^b_{ \rm out}} q_{(b,s)}- \sum_{s\in \mathcal{B}^b_{ \rm in}} q_{(s,b)}  = q_b.
    \end{aligned} 
\end{equation} 
where $p_b$ and $q_b$ are the net active and reactive power injections at bus $b \in \mathcal{B}$.
Note that bus $1$ is considered the slack bus in the power flow equations above. Consequently, we have $\theta_1 = 0$ and $v_1$ is a parameter indicating the output voltage of the on-load tap changer (OLTC) transformer as a discrete Volt/VAR device. The value of $v_1$ is determined by the DSO through adjusting the transformer winding ratios. 
For the other buses, the reactive power injection is given by $q_b = - \sum_{n\in\mathcal{M}^b\cup \mathcal{N}^b } q_n + q^r_b $, where $q_n$ is pre-scheduled reactive power of the consumer $n$ in the day-head market and $q^r_b$ is the output of the reactive power compensator as  a continuous Volt/VAR device connected at bus $b$. In case a bus is not connected to a compensator, the value of $q^r_b$ is set to zero.

Next, we impose the security constraints  for each bus $b \in \mathcal{B}$ and line $(b,s)\in \mathcal{L}$ as \cite{belgioioso2022operationally}
\begin{equation}\label{eq_physical_constraint}
\begin{aligned}
  & p_{(b,s)}^2 + q_{(b,s)}^2 \leq  z_{(b,s)}^2,\\
  & \underline{\theta} \leq \theta_b \leq \overline{\theta},\\
  & \underline{v} \leq v_b \leq \overline{v},
\end{aligned}   
\end{equation}
where the first inequality represents the line capacity constraint at each line, with the maximum line capacity denoted by $z_{(b,s)}$, and the last two inequalities represent the safe limits of the voltage phase angles and magnitudes. 
\par
We write the power flow and security constraints \eqref{eq_line_balance}, \eqref{eq_bus_balance} and \eqref{eq_physical_constraint} as follows.
\begin{equation}
    \begin{aligned}\label{eq_pf_compact}
        & P_{\mathcal{L}}  = -W E \theta + U E v,  \\
    & Q_{\mathcal{L}}  = -U E \theta -W E v, \\
    &E^\top P_{\mathcal{L}} = P_{\mathcal{B}}, \quad  E^\top Q_{\mathcal{L}} = Q_{\mathcal{B}}, \\
      & P_{\mathcal{L}} \odot P_{\mathcal{L}}  + Q_{\mathcal{L}} \odot Q_{\mathcal{L}} \leq  Z \odot Z,\\
  & \underline{\theta} \mathbbb{1} \leq \theta \leq \overline{\theta} \mathbbb{1},\quad 
  \underline{v}\mathbbb{1} \leq v \leq \overline{v}\mathbbb{1},
    \end{aligned}
\end{equation}
where $\odot$ is the Hadamard product, $E$ is the incidence matrix of the distribution network and 
\begin{equation}
\begin{aligned}
&P_{\mathcal{B}}=\col(p_b), \ Q_{\mathcal{B}}=\col(q_b), \\
&P_{\mathcal{L}}=\col(p_{(b,s)}), \ Q_{\mathcal{L}}=\col(q_{(b,s)}), \\
&W=\diag(w_{(b,s)}), \ U=\diag(u_{(b,s)}), \\
&Z=\col(z_{(b,s)}), \ \theta=\col(\theta_b), \ v=\col(v_b).
\end{aligned}
\end{equation}

Finally, the flexibility provision is incorporated by calculating the net active power injections as
\begin{equation}\label{eq_bus_2}
        p_b = - \big( \sum_{n\in\mathcal{M}^b} d_n + \sum_{n\in\mathcal{N}^b} (d_n \mp x_n)\big),
\end{equation}
where $d_n$ is pre-scheduled net load of the consumer $n \in \mathcal{N}\cup \mathcal{M}$ in the day-head market. The flexibility $x_n$ appears with a minus (plus) sign if there is a supply deficit (surplus) in the distribution network, leading the consumers to inject (withdraw) additional power to/from the network. Throughout the rest of this paper, we indicate the DSO physical network constraints by   \eqref{eq_pf_compact} and \eqref{eq_bus_2}.
\begin{remark}\label{remark:powermodel}
   The accurate AC power flow model, due to its nonconvexity and nonlinearity, brings tractability and computational challenges to the proposed method. Therefore, we have employed the above linear power flow model which considers reactive power and bus voltage and provides fairly accurate and robust results \cite{yang2016state,yang2018general}. We note that the subsequent analysis and results hold for nonlinear, yet convex AC power flow models, such as the Second-Order Cone Relaxed DistFlow (SOCRD) model \cite{farivar2013branch} for a single-phase equivalent network and the Semidefinite Program (SDP) relaxed model \cite{lu2023risk} for an unbalanced
three-phase network. As a practical guideline, the bounds in \eqref{eq_physical_constraint} can be slightly restricted to accommodate for any modelling errors resulting from approximating the accurate power flow. 
\end{remark}
\section{Game formulation and analysis}
This section investigates the bidding behaviors of active energy consumers participating in the market mentioned in Section \Romannum{2}. We consider each active consumer in the distribution network to be rational, making strategic decisions in order to provide energy balancing services. First, we formulate the competition among these consumers as a game and then characterize the resulting (generalized) Nash equilibrium. At this equilibrium,  no consumer is
willing to unilaterally deviate from its bid. Additionally, this equilibrium guarantees the feasibility of both local flexibility capacity and physical network constraints. 
The local capacity constraint ensures each consumer $n\in\mathcal{N}$ can provide its allocated flexibility $x_n$, while the physical network constraints ensure the secure operation of the distribution network.

\subsection{Game formulation}
From the perspective of active consumer $n$, its ultimate aim of submitting bid $\beta_n$ is to maximize its net revenue, that is, to minimize
   \begin{equation} \label{eq_objective_original}
       J_n(x_n)=C_n(x_n)-\lambda x_n,
   \end{equation}
with the following local flexibility capacity constraints
\begin{equation} \label{eq_constraint_original}
        0 \le x_n \le \hat{x}_n.
    \end{equation} 
The function $C_n(x_n)$ indicates the cost or disutility of providing flexibility and the term $\lambda x_n$ is the payment from the utility company. Furthermore,  $\hat{x}_n$ is consumer $n$'s maximum available flexibility. Throughout the paper, we make the following assumption on $C_n(x_n)$.

\begin{assumption}\label{assumption}
The cost function $C_n(x_n)$ is convex, twice continuously differentiable, $C_n(0)\geq 0$ and $C_n(x_n)>0$ for  $x_n>0$. Moreover, $C_n^\prime=\nabla C_n$ is Lipschitz continuous with $\kappa>0$, i.e.,
\begin{equation}\label{eq_lipschitz}
    \lVert C'_n(x'_n)-C'_n(x_n)\rVert \leq \kappa \lVert x'_n-x_n\rVert, \forall \, 0 \le x_n, x'_n \le \hat{x}_n.
\end{equation}
\end{assumption}

Following the market operation proposed in Section \ref{balancing market model}, one can substitute $\lambda$ and $x_n$ from \eqref{eq_price} and \eqref{eq_flexibility_allocation} in \eqref{eq_objective_original} and rewrite it as
\begin{multline}\label{eq_objective}
\bar J_n(\beta_n,\beta_{-n})=C_n\big((\xf-\mathbbb{1}^{\top}\beta)/N+\beta_n\big) \\
-(\xf-\mathbbb{1}^{\top}\beta+N\beta_n)(\xf-\mathbbb{1}^{\top}\beta)/(\alpha N^2),   
\end{multline}
where $\beta_{-n}=\col(\beta_m)_{m\in\mathcal{N} \setminus \{n\}}$.
Similarly, we can rewrite  the constraint in \eqref{eq_constraint_original} as
\begin{equation}\label{eq_global_set1}
(\xf-\mathbbb{1}^{\top}\beta)/N+\beta_n \leq \hat{x}_n,
\end{equation}
\begin{equation}\label{eq_global_set2}
(\xf-\mathbbb{1}^{\top}\beta)/N+\beta_n \geq 0,
\end{equation}
for all $n\in \mathcal{N}$,
and the one in \eqref{eq_bus_2} as
\begin{equation}\label{eq_global_set3}
        p_b  = -\sum_{n\in\mathcal{N}^b_p\cup \mathcal{N}^b_a} d_n \pm
        \sum_{n\in\mathcal{N}^b} \big((\xf-\mathbbb{1}^{\top}\beta)/N+\beta_n\big),
\end{equation}
for all $b\in\mathcal{B}\setminus\{1\}$.

As a result, each active consumer $n \in \mathcal{N}$ aims to minimize 
$\bar J_n(\beta_n, \beta_{-n})$ subject to the flexibility capacity constraints \eqref{eq_global_set1} and \eqref{eq_global_set2}. 
Augmenting the latter optimization with the security constraints \eqref{eq_pf_compact} and \eqref{eq_global_set3} enforced by the DSO, we arrive at the following optimization problem  
\begin{equation} \label{prob_consumer}
    \begin{aligned}
        &\min_{\beta_n} \quad &&\bar J_n(\beta_n, \beta_{-n}) \\
        &\mathrm{s.t.} && \beta_n \in K_n(\beta_{-n}),
    \end{aligned}
\end{equation}
where the parametric set $K_n(\beta_{-n})$ is defined as
\[
    \{ \beta_n \in \mathbb{R} \mid \eqref{eq_global_set1},\eqref{eq_global_set2},
        \eqref{eq_global_set3} \ \mathrm{and} \ \eqref{eq_pf_compact} \, \text{hold for some} \, v\,\text{and} \, \theta \}.
\]
From \eqref{prob_consumer}, we note that both the objective function and constraints of each active consumer depend on its own strategy as well as the strategies of other consumers.

We now write the noncooperative game among active consumers in a compact form as the triple: 
\begin{equation}\label{eq_game}
\mathcal{G} =\{\mathcal{N},K,\col(\bar J_n(\beta_n,\beta_{-n}))_{n\in \mathcal{N}}\}, 
\end{equation}
where $K=\prod\limits_{n\in \mathcal{N}}K_n(\beta_{-n})$ is the set of feasible strategies for the consumers.  

Noting that convexity is preserved under affine transformations, it is easy to verify that the set $K$ is convex. Furthermore, to satisfy Slater's constraint qualifications \cite{boyd2004convex}, we assume that $K$ has at least one strictly feasible point.

The competition among consumers in the game $\mathcal{G}$ gives rise to a GNG since their objective functions and the feasible strategy sets are coupled. Next, we analyze the GNE of this game. A point $\beta^* \in K$ is a  GNE of the game, if for all $n\in\mathcal{N}$, the following holds,
\begin{equation}
 \bar J_n(\beta_n,\beta^*_{-n})\geq \bar J_n(\beta^*_n,\beta^*_{-n}),\ \ \forall \  \beta_n \in K_n(\beta^*_{-n}).
\end{equation}

At this point each consumer can minimize its objective function given the bidding strategies of others. Therefore, none of the consumers would unilaterally deviate from its strategy. In this paper, we focus on a specific subclass of GNE, namely v-GNE. Specifically, each player in the game is penalized equally for deviating from coupling constraints at the v-GNE, which corresponds to the solution of a variational inequality problem \cite{kulkarni2012variational}. To characterize the v-GNE of the game $\mathcal{G}$ and verify its existence and uniqueness, we define the
pseudo-gradient  mapping of the game as 
\begin{equation} \label{eq_pgm}
    F=\col(f_n(\beta_n,\beta_{-n}))_{n \in \mathcal{N}},
\end{equation}
where 
\begin{equation}
\begin{aligned}
f_n(\beta_n,\beta_{-n})&=\frac{\partial}{\partial\beta_n}\bar J_n(\beta_n,\beta_{-n}) \\=C'_n(x_n)&\frac{N-1}{N}
+\frac{(\mathbbb{1}^{\top}\beta-\xf)(N-2)+N \beta_n}{\alpha N^2}.
\end{aligned}    
\end{equation}
The map $F$ is strongly monotone and Lipschitz continuous as stated in the following lemma:
\begin{lemma}\label{monolip}
    The pseudo-gradient mapping $F$ in \eqref{eq_pgm} is
    \begin{enumerate}[label=\roman*)]
        \item  strongly monotone if $\alpha<\frac{2}{\kappa(N-1)}$, namely, for any $\bar{\beta},\tilde{\beta}\in K$, 
        ${(\bar{\beta}-\tilde{\beta})^\top(F(\bar{\beta})-F(\tilde{\beta}))\geq \eta_F\|\bar{\beta}-\tilde{\beta}\|^2}$, where 
        \begin{equation}
            \eta_F=\frac{1}{\alpha N}-\frac{\kappa(N-1)}{2N},
        \end{equation}
        \item  Lipschitz continuous with constant
        \begin{equation}
        \kappa_F=\frac{N-1}{N}(\kappa+\frac{1}{\alpha}).
        \end{equation}
    \end{enumerate}
\end{lemma}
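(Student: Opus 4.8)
The plan is to exploit the affine relation $x=A\beta+b$ with $A=I-\tfrac1N\mathbbb{1}\mathbbb{1}^\top$ and to split the pseudo-gradient into a ``cost part'' and a ``linear coupling part''. Writing $g(\beta):=\col(C'_n(x_n))_{n\in\mathcal{N}}$, one reads off from the definition of $f_n$ that
\begin{equation*}
F(\beta)=\tfrac{N-1}{N}\,g(\beta)+H\beta-\tfrac{(N-2)\xf}{\alpha N^2}\mathbbb{1},\qquad H:=\tfrac{(N-2)\mathbbb{1}\mathbbb{1}^\top+N I}{\alpha N^2}.
\end{equation*}
Since $A$ is the orthogonal projector onto $\{\mathbbb{1}\}^\perp$, its eigenvalues are $0$ and $1$, and $H$ has eigenvalue $\tfrac{N-1}{\alpha N}$ on $\mathbbb{1}$ and $\tfrac{1}{\alpha N}$ on $\{\mathbbb{1}\}^\perp$; hence $\lambda_{\min}(H)=\tfrac1{\alpha N}$ and $\lambda_{\max}(H)=\tfrac{N-1}{\alpha N}$ for $N\ge 2$. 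I would record these spectral facts first, noting also that $A$ being a projection gives $\|A\Delta\|\le\|\Delta\|$.

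Next, for any $\bar\beta,\tilde\beta\in K$ set $\Delta:=\bar\beta-\tilde\beta$ and decompose it into its components $\Delta_\parallel=\tfrac{\mathbbb{1}^\top\Delta}{N}\mathbbb{1}$ and $\Delta_\perp=A\Delta$, so that $\|\Delta\|^2=\|\Delta_\parallel\|^2+\|\Delta_\perp\|^2$. The constant term cancels in $F(\bar\beta)-F(\tilde\beta)$, leaving the cost part $\tfrac{N-1}N\,\delta g$, with $\delta g:=g(\bar\beta)-g(\tilde\beta)$, and the linear part $H\Delta$. Two elementary observations drive the monotonicity argument: because $\bar x-\tilde x=A\Delta=\Delta_\perp$ and each $C'_n$ is nondecreasing (convexity in Assumption~\ref{assumption}), the ``aligned'' pairing $\Delta_\perp^\top\delta g=\sum_n(\bar x_n-\tilde x_n)(C'_n(\bar x_n)-C'_n(\tilde x_n))\ge 0$; and the Lipschitz bound \eqref{eq_lipschitz} gives $\|\delta g\|\le\kappa\|\Delta_\perp\|$. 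Combining them, $\Delta^\top\delta g=\Delta_\perp^\top\delta g+\Delta_\parallel^\top\delta g\ge-\|\Delta_\parallel\|\,\|\delta g\|\ge-\kappa\|\Delta_\parallel\|\,\|\Delta_\perp\|$.

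For part (i) I would then write $\Delta^\top\!\big(F(\bar\beta)-F(\tilde\beta)\big)=\tfrac{N-1}N\Delta^\top\delta g+\Delta^\top H\Delta$, lower-bound $\Delta^\top H\Delta\ge\lambda_{\min}(H)\|\Delta\|^2=\tfrac1{\alpha N}\|\Delta\|^2$, insert the estimate on $\Delta^\top\delta g$ above, and close with Young's inequality $\|\Delta_\parallel\|\,\|\Delta_\perp\|\le\tfrac12\|\Delta\|^2$; this yields $\Delta^\top\!\big(F(\bar\beta)-F(\tilde\beta)\big)\ge(\tfrac1{\alpha N}-\tfrac{\kappa(N-1)}{2N})\|\Delta\|^2=\eta_F\|\Delta\|^2$, and positivity of $\eta_F$ is exactly $\alpha<\tfrac{2}{\kappa(N-1)}$. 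For part (ii) a triangle inequality suffices: $\|F(\bar\beta)-F(\tilde\beta)\|\le\tfrac{N-1}N\|\delta g\|+\|H\Delta\|\le\tfrac{N-1}N\kappa\|\Delta\|+\lambda_{\max}(H)\|\Delta\|=\tfrac{N-1}N(\kappa+\tfrac1\alpha)\|\Delta\|=\kappa_F\|\Delta\|$.

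The main obstacle is obtaining the factor $\tfrac12$ in $\eta_F$ (hence the sharp threshold $\alpha<\tfrac{2}{\kappa(N-1)}$): a naive Cauchy--Schwarz bound $|\Delta^\top\delta g|\le\kappa\|\Delta\|^2$ only delivers the weaker constant $\tfrac1{\alpha N}-\tfrac{\kappa(N-1)}{N}$. The gain comes from \emph{discarding} the provably nonnegative aligned term $\Delta_\perp^\top\delta g$ and bounding only the genuine cross term $\Delta_\parallel^\top\delta g$ against the product $\|\Delta_\parallel\|\,\|\Delta_\perp\|$, so that Young's inequality supplies the halving. Finally I would verify the edge case $N=2$, where the $\mathbbb{1}\mathbbb{1}^\top$ contribution to $H$ vanishes, to confirm that $\lambda_{\min}(H)=\tfrac1{\alpha N}$ and $\lambda_{\max}(H)=\tfrac{N-1}{\alpha N}$ remain valid.
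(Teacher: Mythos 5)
Your proposal is correct, and for part (i) it takes a genuinely different route from the paper. The paper argues at second order: it writes $F(\beta)=\frac{N-1}{N}C'(x)+A_F\beta+b_F$ (your $H$ is its $A_F$), invokes the criterion that $F$ is $\eta_F$-strongly monotone if and only if the symmetrized Jacobian $\frac{N-1}{N}\,\frac{C''(x)A+AC''(x)}{2}+A_F\succeq \eta_F I$ holds uniformly, and then bounds the least eigenvalue of the symmetrized cost term entrywise via the Gershgorin circle theorem together with $0\le C''_n\le\kappa$; this is precisely where the twice continuous differentiability in Assumption~\ref{assumption} is consumed. You instead argue directly at first order: after the orthogonal split $\Delta=\Delta_\parallel+\Delta_\perp$ induced by the projector $A$, you discard the provably nonnegative aligned term $\Delta_\perp^\top\delta g=(\bar x-\tilde x)^\top\big(C'(\bar x)-C'(\tilde x)\big)\ge 0$ (monotonicity of each $C'_n$), estimate only the cross term $\Delta_\parallel^\top\delta g$ by Cauchy--Schwarz and $\|\delta g\|\le\kappa\|\Delta_\perp\|$, and recover the crucial factor $\tfrac12$ in $\eta_F$ from Young's inequality $\|\Delta_\parallel\|\,\|\Delta_\perp\|\le\tfrac12\|\Delta\|^2$ --- the same factor the paper obtains from the Gershgorin radius. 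Your route buys two things: the lemma then holds under the weaker hypothesis that $C_n$ is convex with $\kappa$-Lipschitz gradient (no second derivative ever appears), and the argument is self-contained, needing neither the Jacobian-based monotonicity criterion nor Gershgorin; the paper's route is more mechanical and extends readily whenever the symmetrized Jacobian's spectrum can be estimated, but it is tied to the $C^2$ assumption. Part (ii) is essentially identical in both proofs (triangle inequality, $\|A\Delta\|\le\|\Delta\|$, and $\lambda_{\max}(A_F)=\frac{N-1}{\alpha N}$), and your spectral bookkeeping for $A_F$, including the $N=2$ edge case, matches the paper's stated values $\lambda_{\min}(A_F)=\frac{1}{\alpha N}$ and $\lambda_{\max}(A_F)=\frac{N-1}{\alpha N}$.
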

The proof of Lemma~\ref{monolip} is provided in Appendix.
Now, we state the main result of this subsection.
\begin{proposition}\label{vGNE}
Assume that $\alpha<\frac{2}{\kappa(N-1)}$.Then the game $\mathcal{G}$ has a unique v-GNE, which is given by the unique solution $\beta^* \in K$ to the variational inequality 
\begin{equation}\label{eq_vi}
(\beta-\beta^*)^\top F(\beta^*)\geq 0, \quad \forall \, \beta\in K.
\end{equation}
\end{proposition}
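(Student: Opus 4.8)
The plan is to prove the proposition in two stages: first reformulate the v-GNE as the solution set of the variational inequality $\mathrm{VI}(K,F)$ in \eqref{eq_vi}, and then exploit the strong monotonicity and Lipschitz continuity of $F$ supplied by Lemma~\ref{monolip} to show that this VI admits exactly one solution. The only condition needed throughout, $\alpha<\frac{2}{\kappa(N-1)}$, is precisely what Lemma~\ref{monolip} requires for strong monotonicity, so I would carry it as a standing assumption.

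First I would verify that $\mathcal{G}$ is a \emph{jointly convex} (shared-constraint) generalized Nash game, which is the structural hypothesis behind the VI reformulation. The set $K$ is already known to be convex, it is nonempty by the Slater-type assumption, and it is closed since it is carved out by continuous affine and convex constraints; the decisive point is that the coupling constraints \eqref{eq_global_set3} and \eqref{eq_pf_compact} are \emph{common} to all players. It then remains to check that each $\bar J_n$ is convex in its own variable $\beta_n$. Since $x_n$ is affine in $\beta_n$ with $\partial x_n/\partial\beta_n=(N-1)/N$, differentiating $f_n$ once more gives $\partial^2\bar J_n/\partial\beta_n^2=C_n''(x_n)\big(\tfrac{N-1}{N}\big)^2+\tfrac{2(N-1)}{\alpha N^2}\ge 0$, where nonnegativity follows from convexity of $C_n$ in Assumption~\ref{assumption}. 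Hence each player solves a convex program over a shared convex set.

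With the jointly convex structure and Slater's constraint qualification in place, I would invoke the standard correspondence between variational equilibria and variational inequalities for shared-constraint games \cite{kulkarni2012variational}: writing the KKT systems of the players' problems \eqref{prob_consumer} and imposing equal multipliers on the common constraints shows that $\beta^*\in K$ is a v-GNE if and only if it solves $\mathrm{VI}(K,F)$, with $F$ the pseudo-gradient \eqref{eq_pgm}. This reduces the claim to the solvability and uniqueness of \eqref{eq_vi}. To finish, I would apply the classical existence-and-uniqueness theorem for variational inequalities: by Lemma~\ref{monolip}, $F$ is strongly monotone with modulus $\eta_F>0$ and Lipschitz (hence continuous) on the nonempty closed convex set $K$, which guarantees a unique solution $\beta^*$ of \eqref{eq_vi}. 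Combined with the equivalence above, this $\beta^*$ is the unique v-GNE of $\mathcal{G}$.

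The main obstacle is the equivalence step: rigorously establishing that the v-GNE coincide with the solutions of $\mathrm{VI}(K,F)$. This hinges on the shared-constraint (jointly convex) structure together with a constraint qualification, so that reformulating the players' KKT conditions with a \emph{single} common multiplier vector for the coupling constraints is valid. By contrast, once this reduction is secured, the existence and uniqueness of the VI solution is routine given Lemma~\ref{monolip}, since strong monotonicity directly yields coercivity and injectivity of $F$ even though $K$ need not be bounded (note $A=I-\tfrac{1}{N}\mathbbb{1}\mathbbb{1}^\top$ is singular, so the capacity limits \eqref{eq_global_set1}--\eqref{eq_global_set2} do not confine $\beta$ along the direction $\mathbbb{1}$).
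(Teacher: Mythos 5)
Your proposal is correct and follows essentially the same route as the paper: the paper's proof likewise combines the strong monotonicity of $F$ from Lemma~\ref{monolip}(i) with closedness and convexity of $K$, invokes the standard existence--uniqueness theorem for strongly monotone variational inequalities (Facchinei--Pang, Thm.~2.3.3), and outsources the equivalence between the v-GNE and the solution of $\mathrm{VI}(K,F)$ to the shared-constraint theory of Kulkarni--Shanbhag. The only difference is that you spell out details the paper delegates to citations---player-wise convexity of $\bar J_n$ (your second-derivative computation is correct) and the equal-multiplier KKT argument---which is a sound elaboration rather than a different approach.
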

\begin{proof}
The mapping $F$ is strongly monotone by
Lemma~\ref{monolip} item~{i}. Note that the set $K$ is convex and closed. It follows from \cite[Theorem~2.3.3]{facchinei2003finite} that VI$\big(K,F\big)$ has a unique solution, corresponding to the unique v-GNE of the game $\mathcal{G}$ \cite{kulkarni2012variational}.
\end{proof}

\begin{remark}
{One can show that the condition assumed in Proposition \ref{vGNE} is not required for proving \textit{uniqueness} of the v-GNE. In other words, the v-GNE, if exists, is unique. We resort to this assumption since, in any case, the convergence results of Section \ref{algorithm} hinges on the strong monotonicity of the map $F(\cdot)$ established in Lemma \ref{monolip}.} 
\end{remark}

\begin{remark} \label{rm:extension}
We note that the above market model and the subsequent analyses can be extended to the case where the market is cleared at multiple time slots providing that there is no coupling between the time slots or if any coupling constraint, e.g. those corresponding to energy storage devices, appears linearly \cite{belgioioso2022operationally}.
\end{remark}

\vspace{-0.6cm}
\subsection{Efficiency analysis}
The self-interested behaviors of energy consumers can  lead to market inefficiency. To analyze this efficiency loss in  the game $\mathcal{G}$, we first introduce a social welfare maximization problem as the benchmark. From a social  point of view, it is desirable to utilize the available flexible resources in such a way that the total cost/disutility of active energy consumers is minimized. If consumers are willing to cooperate and reveal their actual economic and technical characteristics to the utility company and the DSO, then the  efficient flexibility allocation can be found as the solution to the following social welfare optimization problem:
\begin{equation}\label{eq_optimization1}
\min_{x} \sum_{n\in \mathcal{N}} C_n(x_n)\ \ \mathrm{s.t.}\quad \eqref{eq_flexibility_balance}, \eqref{eq_pf_compact}, \eqref{eq_bus_2} \ \mathrm{and} \ \eqref{eq_constraint_original}. 
\end{equation}

The following lemma relates the aforementioned social welfare optimization problem with the allocated feasibility at the v-GNE of the game $\mathcal{G}$.
\smallskip{}
\begin{lemma}\label{equivalence}
Let $\beta^*$ be the v-GNE of the game $\mathcal{G}$, i.e, the unique solution to the VI in \eqref{eq_vi},  
and $x^*$ be the corresponding allocated flexibility at this point, namely $x^*=A\beta^*+b$. Then, $x^*$ is the unique solution of the following optimization problem:
\begin{equation}\label{eq_optimization2}
\min_{x} \sum_{n\in \mathcal{N}} D_n(x_n)\ \ \mathrm{s.t.}\quad \eqref{eq_flexibility_balance}, \eqref{eq_pf_compact}, \eqref{eq_bus_2} \ \mathrm{and} \ \eqref{eq_constraint_original}.
\end{equation}
where $D_n(x_n)=C_n(x_n)+\frac{x^2_n}{2\alpha(N-1)}$.
\end{lemma}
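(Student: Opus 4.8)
The plan is to characterize the minimizer of \eqref{eq_optimization2} by its first-order optimality condition and to show that the allocation $x^\ast = A\beta^\ast + b$ meets it, transporting the game's variational inequality \eqref{eq_vi} through the affine allocation map $x = A\beta + b$. Since $D_n''(x_n) = C_n''(x_n) + \tfrac{1}{\alpha(N-1)} \ge \tfrac{1}{\alpha(N-1)} > 0$, the cost $\sum_{n\in\mathcal{N}} D_n(x_n)$ is strongly convex; as the feasible set is nonempty, convex and closed, problem \eqref{eq_optimization2} has a unique solution $\hat{x}$, characterized by $(x-\hat{x})^\top \nabla D(\hat{x}) \ge 0$ for every feasible $x$, where $\nabla D(x):=\col\big(C'_n(x_n)+\tfrac{x_n}{\alpha(N-1)}\big)_{n\in\mathcal{N}}$. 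It then suffices to prove that $x^\ast$ satisfies this same inequality, since uniqueness forces $x^\ast=\hat{x}$.

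Two structural facts make the transport possible. First, every constraint defining $K$ enters only through the allocation $x_n=(\xf-\mathbbb{1}^\top\beta)/N+\beta_n$, and $x=A\beta+b$ automatically satisfies the balance $\mathbbb{1}^\top x=\xf$; hence $\beta\in K$ if and only if $x=A\beta+b$ is feasible for \eqref{eq_optimization2}, and every feasible displacement for the $x$-problem lies in $\mathbbb{1}^\perp$. Second, rewriting the pseudo-gradient in terms of the allocation gives $F(\beta)=\tfrac{N-1}{N}C'(x)+\tfrac{1}{\alpha N}x-\tfrac{N-1}{\alpha N}(x-\beta)$ with $C'(x):=\col(C'_n(x_n))$; using $A\mathbbb{1}=0$ together with $x-\beta=\tfrac{\xf-\mathbbb{1}^\top\beta}{N}\mathbbb{1}\in\mathrm{span}(\mathbbb{1})$ and $Ax=A\beta=x-b$ collapses $AF(\beta)$ to $\tfrac{N-1}{N}AC'(x)+\tfrac{1}{\alpha N}A\beta$, which is precisely $\tfrac{N-1}{N}A\nabla D(x)$. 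This yields the key identity
\begin{equation*}
AF(\beta)=\tfrac{N-1}{N}A\nabla D(x),\qquad x=A\beta+b.
\end{equation*}

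The conclusion then follows cleanly. For an arbitrary feasible $x$, lift it to $\beta:=\beta^\ast+(x-x^\ast)$. This $\beta$ belongs to $K$, since its allocation equals $x$ because $A(x-x^\ast)=x-x^\ast$ (as $x-x^\ast\in\mathbbb{1}^\perp$), and membership in $K$ depends only on the allocation. The v-GNE inequality \eqref{eq_vi} then gives $(x-x^\ast)^\top F(\beta^\ast)=(\beta-\beta^\ast)^\top F(\beta^\ast)\ge 0$. Since $x-x^\ast\in\mathbbb{1}^\perp$ obeys $A(x-x^\ast)=x-x^\ast$, I may insert $A$ and apply the identity to obtain $(x-x^\ast)^\top F(\beta^\ast)=(x-x^\ast)^\top AF(\beta^\ast)=\tfrac{N-1}{N}(x-x^\ast)^\top\nabla D(x^\ast)$. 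Thus $(x-x^\ast)^\top\nabla D(x^\ast)\ge 0$ for all feasible $x$, so $x^\ast$ solves the optimality variational inequality and, by strong convexity, is the unique minimizer of \eqref{eq_optimization2}.

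I expect the key identity $AF=\tfrac{N-1}{N}A\nabla D$ to be the main obstacle, precisely because $F\neq A\nabla D$ as vectors: the allocation map is non-injective, so $F$ carries a component along $\mathbbb{1}$ with no counterpart in $\nabla D$. Projecting by $A$ annihilates exactly that component, and restricting the comparison to feasible directions in $\mathbbb{1}^\perp$ is what makes the two variational inequalities agree. The remaining subtlety is the choice of lift: taking $\beta:=\beta^\ast+(x-x^\ast)$ rather than the naive $\beta:=x$ keeps the displacement vector identical on both sides and avoids a spurious $\mathbbb{1}^\top F(\beta^\ast)$ term.
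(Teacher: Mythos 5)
Your proof is correct, and it follows the same overall strategy as the paper's: characterize the minimizer of \eqref{eq_optimization2} by its first-order variational inequality and verify that $x^*$ satisfies it by transporting the game's VI \eqref{eq_vi} through the affine allocation map. The difference lies in how each argument handles the kernel of that map (the $\mathbbb{1}$ direction, along which $F$ and $\nabla D$ genuinely disagree). The paper makes the $\mathbbb{1}$-component explicit, writing $F(\beta^*)=\tfrac{N-1}{N}\big(\nabla D(x^*)-\lambda^*\mathbbb{1}\big)$ with $\lambda^*$ the clearing price, and then must kill the resulting spurious term for an \emph{arbitrary} lift $\beta$ of a feasible $x$; it does so by substituting $\beta^*\pm\mathbbb{1}\in K$ into the VI to derive the sum-zero identity \eqref{eq_sum_zero}. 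You instead prove the projected identity $AF(\beta)=\tfrac{N-1}{N}A\nabla D(x)$ and choose the canonical lift $\beta=\beta^*+(x-x^*)$, whose displacement lies in $\mathbbb{1}^{\perp}$ where $A$ acts as the identity; this makes the $\mathbbb{1}$-component of $F(\beta^*)$ irrelevant automatically and removes the need for any sum-zero step. Your route is slightly more streamlined and makes the geometric mechanism (non-injectivity of the allocation map, feasible directions confined to $\mathbbb{1}^{\perp}$) explicit, which you correctly flag as the crux; the paper's route is marginally longer but exposes the economically meaningful fact that the $\mathbbb{1}$-component of $\nabla D(x^*)$ is exactly the equilibrium price $\lambda^*$, i.e., the multiplier of the balance constraint. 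Both establish uniqueness the same way, via strict (in your case strong) convexity of $\sum_n D_n$.
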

The proof of Lemma \ref{equivalence} is provided in Appendix.

\begin{remark} The above lemma indicates that the allocated flexibility at the v-GNE of the game $\mathcal{G}$, i.e.  $x^*$, is different from the solution to the social welfare optimization  \eqref{eq_optimization1}. The additional nonnegative term $\frac{x^2_n}{2\alpha(N-1)}$ is due to the strategic bidding behavior of self-interested consumer $n$ and $D_n(x_n)$ can be regarded as a fake cost function that consumer $n$ submits to the utility company to gain more profit.  
\end{remark}

We use Price of Anarchy ($\mathrm{PoA}$) \cite{paccagnan2022utility} to measure the efficiency loss of the game $\mathcal{G}$, which indicates  how the overall efficiency of a game degrades due to the strategic behavior of consumers. $\mathrm{PoA}$ is defined as the ratio of the total cost between the Nash Equilibrium and the social optimum. In our case, this gives rise to
\begin{equation}\label{eq_poa1}
{\mathrm{PoA}}=\frac{\sum_{n\in\mathcal{N}} C_n(x^*_n)}{\sum_{n\in\mathcal{N}} C_n(\bar{x}_n)},   
\end{equation}
where $\bar{x}$ is the solution to the social welfare optimization problem \eqref{eq_optimization1}, 
and $x^*$ is the allocated flexibility at v-GNE of the game $\mathcal{G}$.
We then have the following result.
\smallskip{}
\begin{proposition}\label{poa}
The price of anarchy in \eqref{eq_poa1} satisfies
\begin{equation}\label{eq_poa}
    \mathrm{PoA}<1+\frac{1}{2\alpha (N-1)}\frac{\sum_{n\in\mathcal{N}} \bar{x}^2_n}{\sum_{n\in\mathcal{N}} C_n(\bar{x}_n)}\,.
\end{equation}
\end{proposition}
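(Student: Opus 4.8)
The plan is to exploit that the two programs \eqref{eq_optimization1} and \eqref{eq_optimization2} are optimized over the \emph{same} feasible set and differ only in their objectives. Writing $C(x):=\sum_{n\in\mathcal{N}}C_n(x_n)$ and recalling $D_n(x_n)=C_n(x_n)+\frac{x_n^2}{2\alpha(N-1)}$, the objective of \eqref{eq_optimization2} is $D(x)=C(x)+\frac{1}{2\alpha(N-1)}\sum_{n\in\mathcal{N}}x_n^2$. By Lemma~\ref{equivalence}, $x^*$ minimizes $D$ over the feasible set, while $\bar{x}$, the minimizer of $C$ in \eqref{eq_optimization1}, lies in the same feasible set. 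Hence optimality of $x^*$ for $D$ gives $D(x^*)\le D(\bar{x})$ directly, with no need to revisit the network constraints \eqref{eq_pf_compact} and \eqref{eq_bus_2}.

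First I would expand $D(x^*)\le D(\bar{x})$ and rearrange to isolate the numerator of the $\mathrm{PoA}$, obtaining
\begin{equation*}
\sum_{n\in\mathcal{N}}C_n(x^*_n)\le \sum_{n\in\mathcal{N}}C_n(\bar{x}_n)+\frac{1}{2\alpha(N-1)}\Big(\sum_{n\in\mathcal{N}}\bar{x}_n^2-\sum_{n\in\mathcal{N}}(x^*_n)^2\Big).
\end{equation*}
Next I would upgrade this to a strict inequality by discarding the term $-\frac{1}{2\alpha(N-1)}\sum_{n\in\mathcal{N}}(x^*_n)^2$, which is \emph{strictly} negative: the matching condition \eqref{eq_flexibility_balance} forces $\sum_{n\in\mathcal{N}}x^*_n=\xf>0$, so $x^*$ is not identically zero and $\sum_{n\in\mathcal{N}}(x^*_n)^2>0$. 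This yields $\sum_{n\in\mathcal{N}}C_n(x^*_n)< \sum_{n\in\mathcal{N}}C_n(\bar{x}_n)+\frac{1}{2\alpha(N-1)}\sum_{n\in\mathcal{N}}\bar{x}_n^2$.

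Finally I would divide through by $\sum_{n\in\mathcal{N}}C_n(\bar{x}_n)$, which is strictly positive because $\xf>0$ makes at least one $\bar{x}_n$ positive and Assumption~\ref{assumption} gives $C_n(\bar{x}_n)>0$ there (and $C_n\ge 0$ elsewhere). Division by a positive quantity preserves the strict inequality and reproduces exactly the bound \eqref{eq_poa}. The only genuine obstacle is justifying \emph{strictness} rather than a weak inequality; this hinges entirely on the positivity of the required flexibility $\xf$, which guarantees $\sum_{n\in\mathcal{N}}(x^*_n)^2>0$ in the discarded term, together with the positivity of the denominator, both of which follow from $\xf>0$ and Assumption~\ref{assumption}.
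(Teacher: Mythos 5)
Your proof is correct and follows essentially the same route as the paper's: both rest on Lemma~\ref{equivalence} to get $\sum_n D_n(x^*_n)\le \sum_n D_n(\bar{x}_n)$, obtain strictness from the positivity of $\sum_n (x^*_n)^2/(2\alpha(N-1))$ forced by \eqref{eq_constraint_original}--\eqref{eq_flexibility_balance} with $\xf>0$, and divide by $\sum_n C_n(\bar{x}_n)$. The only differences are cosmetic (the paper places the strict inequality first, as $\sum_n C_n(x^*_n)<\sum_n D_n(x^*_n)$, while you discard the negative term at the end) plus your slightly more explicit justification that the denominator is positive, which the paper leaves implicit.
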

\begin{proof}
We observe that
\begin{equation}
\sum_{n\in\mathcal{N}} C_n(x^*_n)<
\sum_{n\in\mathcal{N}} D_n(x^*_n)\leq \sum_{n\in\mathcal{N}} D_n(\bar{x}_n),    
\end{equation}
where the first inequality follows from the positivity of $\sum_{n\in\mathcal{N}}\frac{(x^*_n)^2}{2\alpha(N-1)}, $\footnote{Note that due to \eqref{eq_constraint_original} and \eqref{eq_flexibility_balance}, there exists $n\in\mathcal{N}$ with $x^*_n>0$.} and the second inequality follows from the fact that $x^*$ is the solution to the optimization problem \eqref{eq_optimization2}.
Hence, by using the definition of $D_n(\cdot)$, we obtain that 
\begin{equation}
  \sum_{n\in\mathcal{N}} C_n(x^*_n)<\sum_{n\in\mathcal{N}} \big(C_n(\bar{x}_n)+\frac{\bar{x}^2_n}{2\alpha(N-1)}\big),  
\end{equation}
which leads us to \eqref{eq_poa}.
\end{proof}

\begin{remark}
The result of the above proposition provides an upper bound for $\mathrm{PoA}$. This bound largely depends on the number of  active energy consumers $N$ and the parameter $\alpha$. For sufficiently large quantity $\alpha (N-1)$, the upper bound gets close to $1$ and the market power of each individual consumer decreases.
\end{remark}

\section{Algorithm design}\label{algorithm}
In this section, we present a decentralized market clearing method to compute the v-GNE of the game $\mathcal{G}$ in \eqref{eq_game}. Both the utility company and the DSO are willing to engage in this algorithm, because at v-GNE, energy consumers are unlikely to deviate unilaterally from their promised bids and the physical network constraints are maintained. We first explain the restrictions in sharing information and knowledge available to each party in our energy balancing problem. As for the consumers, they are not willing to reveal their economic and technical specifications, including the cost function $C_n(x_n)$ and the maximum available flexibility $\hat{x}_n$, to the utility company, the DSO or any other consumers. The utility company is not willing to share the load adjustment requirement $\xf$ with the consumers. Finally, the physical specifications of the system, such as network topology and line parameters are only known to the DSO. 

Motivated by the above information sharing limitations, we split the feasible set $K$ in \eqref{eq_game}, as $K=\Lambda \cap \Psi$, where $\Lambda=\{\beta\in \mathbb{R}^{N} \mid \eqref{eq_global_set1}\,  \text{holds} \}$ and $\Psi=\{\beta\in \mathbb{R}^{N} \mid \eqref{eq_pf_compact},\, \eqref{eq_global_set2}\, \text{and}\,\eqref{eq_global_set3} \, \text{hold}\}$. Note that the former contains the private information of consumers $\{\hat x_n\}_{n\in \mathcal{N}}$ and will be handled by the consumers themselves in the algorithm. However, the latter set involves the non-private information of consumers alongside the DSO physical network constraints and will be handled by the DSO. 

\vspace{.3cm}
\begin{algorithm}
\caption{Decentralized market clearing} \label{alg1}
\textbf{Initialization:} $\forall \, n\in\mathcal{N}$, set $\beta_n^0$ and $\gamma_n^0\in \mathbb{R}^+$ and choose step sizes $\rho_n,\nu_n\in \mathbb{R}^+$.

\textbf{Iterate until convergence:} 
\begin{enumerate}[leftmargin=*]
    \item $\forall\, n\in\mathcal{N}$ (In parallel): 

 \qquad Update bid $\tilde{\beta}_n^{k+1}$s using \eqref{eq_update_beta},
 
 \qquad Communicate $\tilde{\beta}_n^{k+1}$s to the DSO. 
    \item The DSO:

  \qquad Modifies bids  $\tilde{\beta}_n^{k+1}$s using \eqref{dso},

 \qquad Send  $\beta_n^{k+1}$s to the utility company.
     \item The utility company:
     
     \qquad Updates price $\lambda^{k+1}$ using  \eqref{eq_price}.

      \qquad Broadcasts price $\lambda^{k+1}$ to consumers.
    \item $\forall\, n\in\mathcal{N}$ (In parallel): 
    
    \qquad Update dual variable $\gamma_n^{k+1}$s using  \eqref{eq_update_gamma},
      
    \qquad Send $\gamma_n^{k+1}$s to the utility company.  
    \end{enumerate}
\end{algorithm}
\vspace{.3cm}
The market clearing process is summarized in Algorithm \ref{alg1}. In the iterative bidding process, each energy consumer $n$ is required to update two variables $\tilde{\beta}_n$ and $\gamma_n$. The variable $\tilde{\beta}_n$ is an intermediate variable and can be interpreted as the intended and unverified bid of energy consumer $n$. The update rule for the variable $\tilde{\beta}_n$ is defined as follows, 
\begin{equation}\label{eq_update_beta}
              \tilde{\beta}_n^{k+1}  = \beta_n^k - \rho_n h_n^k,
          \end{equation}
where $k$ indexes the time step and $h_n^k$ is calculated as
\begin{equation}\label{h}
    \begin{aligned}
     h_n^k&=C'_n(\alpha \lambda^k+\beta_n^k)\frac{N-1}{N} + \frac{\alpha \lambda^k (2-N)+\beta_n^k}{\alpha N}\\
     & -\frac{1}{N} \mathbbb{1}^{\top} \gamma^k + \gamma_n^k,
     \end{aligned}           
\end{equation}
with $\gamma=\col(\gamma_n)_{n\in\mathcal{N}}$. The dual variable $\gamma_n$ is associated with the Lagrangian multiplier of the coupling constraint \eqref{eq_global_set1}. 
The update rule for this variable is  
\begin{multline}\label{eq_update_gamma}
     \gamma_n^{k+1}=\proj_{\mathbb{R}^+}(\gamma_n^k + \nu_n(2x_n^{k+1}-x_n^k-\hat{x}_n)).
\end{multline}
Note that $x_n$ can be determined locally using the expression $x_n=\alpha \lambda +\beta_n$, provided that the price is communicated by the utility company.

Next, the DSO performs the validation and necessary corrections of all the intended bids (i.e., $\tilde{\beta}=\col(\tilde{\beta}_n)_{n\in\mathcal{N}}$) by solving an optimization problem:
\begin{equation}\label{dso}
            \beta^{k+1} = \argmin_{z\in \Psi} ||z-\tilde{\beta}^{k+1}||.
          \end{equation}
The solution of this problem ensures compliance with the constraints specified in set $\Psi$. Note that  $\beta=\tilde{\beta}$ if the intended bids are feasible for DSO; otherwise the bids are modified to the ``closest" physically feasible ones. The utility company is responsible for updating the price.  

 The communication at each iteration is described as follows:
\begin{itemize}
    \item For bid verification and modification, the DSO collects $\tilde{\beta}^{k+1}_n$s from energy consumers or the utility company collects them and relays them to the DSO . 
    \item The DSO dispatches the authenticated bids $\beta^{k+1}_n$s to the utility company for updating the price.
    \item The utility company sends $\lambda^{k+1}_n$s to each energy consumer $n\in\mathcal{N}$ to update the dual variables $\gamma_n$s.
    \item The utility company aggregates $\gamma^k_n$s from consumers and transmits their cumulative sum $\mathbbb{1}^{\top} \gamma^{k}$ to each energy consumer $n\in\mathcal{N}$ to facilitate the updating of the bids.
\end{itemize}

While global information such as 
$\alpha$ and the total number of active consumers $N$
is necessary for the implementation of the algorithm, these parameters are typically public and exhibit infrequent alterations. The parameters $\rho_n$ and $\nu_n$ are step sizes which should be sufficiently small to ensure convergence. The following result provides upper bounds for the step sizes $\rho_n$ and $\nu_n$ in Algorithm~\ref{alg1} such that convergence to v-GNE of the game is guaranteed. 
\begin{proposition}\label{convergence}
Consider the monotonicity and Lipschitz continuity constants $\eta_F$ and $\kappa_F$ given in Lemma~\ref{monolip}.
If
\begin{equation}\label{eq_step_sizes}
\frac{\kappa_F^2}{2\eta_F}< \frac{1}{\bar \rho} - \bar \nu,     
\end{equation}
with $\bar\rho=\max\{\rho_n\}_{n\in\mathcal{N}}$
and $\bar\nu=\max\{\nu_n\}_{n\in\mathcal{N}}$,
then the vector $\omega^k=\col(\beta^k, \gamma^k)$ in Algorithm~\ref{alg1}  sublinearly converges to $\omega^*=\col(\beta^*, \gamma^*)$, where $\beta^*$ is the v-GNE of the game, and
\begin{equation}\label{eq_rate}
    \min_{j=0,1,...,k} \|\omega^{j+1}-\omega^j\|^2  \leq \mathcal{O}(1/k).
\end{equation}
\end{proposition} 
The condition in \eqref{eq_rate} illustrates Algorithm \ref{alg1} generates sequences for which $\|\omega^{k+1}-\omega^k\|^2$ converges to zero arbitrarily closely at a rate of $ \mathcal{O}(1/k)$. This result is consistent with the findings in \cite[Chapter 5.2]{boyd2016primer} for algorithms utilizing averaged operators. We prove the convergence of this algorithm by showing it is essentially a preconditioned forward-backward iteration. The details are provided in Appendix.
\section{Case study}
We perform an extensive numerical study on the modified IEEE 33-bus distribution network \cite{dolatabadi2020enhanced} shown in Fig.~\ref{fig:simulation_network} to validate the proposed market. The pre-scheduled energy generation and consumption profiles are also based on \cite{dolatabadi2020enhanced}. For practical reasons, this benchmark network is enhanced in several aspects. In particular,  the ratio of lines' reactance to resistance is decreased, reactive power compensators are integrated and more strict voltage limits are imposed. 
\begin{figure}[ht]
\begin{center}
\includegraphics[width=0.48\textwidth]{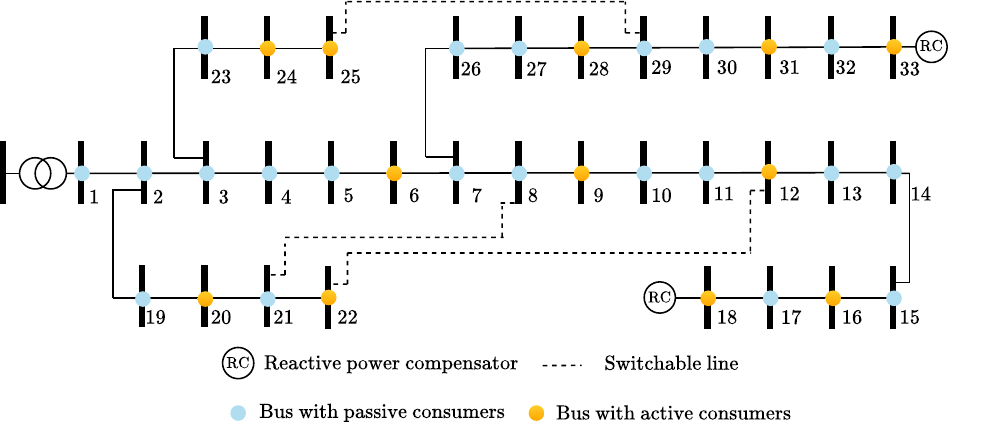}
\end{center}
\caption{IEEE 33-bus distribution network}
\label{fig:simulation_network}
\end{figure}

We consider linear-quadratic cost functions for consumers as $C_n(x_n)=\frac{1}{2}a_nx_n^2+b_nx_n$. The parameters $a_n$ and $b_n$ are arbitrarily selected in the intervals $[0.003, 0.005]\$/(\mathrm{kWh})^2$ and $[0.35, 0.45]\$ /\mathrm{kWh}$, respectively. 
These cost functions satisfy Assumption~\ref{assumption} with $\kappa=0.005$.
The maximum flexibility capacities, $\{\hat{x}_n\}_{n\in \mathcal{N}}$ are also arbitrarily selected with $\sum_{n\in\mathcal{N}} \hat{x}_n\geq \xf=100\mathrm{kWh}$. The utility company can choose the parameter in the interval $\alpha=\delta \frac{2}{\kappa(N-1)}$ with $\delta \in (0,1)$. Note that the latter choice satisfies the condition in Proposition \ref{vGNE}. To ensure the convergence condition \eqref{eq_step_sizes} holds, the step sizes $\rho_n$ and $\nu_n$ are chosen such that $\bar{\rho}=c\frac{2\eta_F}{\kappa_F^2}$ and $\bar{\nu}=0.8\left(\frac{1}{c}-1\right) \frac{2\eta_F}{\kappa_F^2}$ with $c\in (0,1)$. In the following, we first show the economic efficiency at v-GNE and briefly examine the effect of a consumer's deceptive behavior. Next, we evaluate the importance of having physical network constraints in our model. Finally, we test the effectiveness of our proposed algorithm.

\subsection{Efficiency analysis} \label{subsec:EA}
In this part, we mainly focus on the advantages of our game in terms of the market efficiency performance. We do not incorporate the physical network constraints in our model here, to make it consistent with the setup in \cite{li2015demand} and \cite{xu2015demand} where the physical network is neglected. 
The three cases are as follows,

\noindent Case 1 (C1): Each consumer reveals its true cost function and constraints to the utility company. In this case, we solve the social welfare maximization problem \eqref{eq_optimization1}.

\noindent Case 2 (C2): Each consumer chooses the supply function in \cite{li2015demand} and \cite{xu2015demand} as its bidding strategy.

\noindent Case 3 (C3): Each consumer chooses the y-intercept of its linear supply function  ($\beta_n$) as its bidding strategy based on our proposed market model.  

Moreover, we study two scenarios for each case. In Scenario~1, we consider only the nonnegative lower bound on the available flexibility $x_n$ to be consistent with \cite{li2015demand}, whereas in Scenario 2 we consider also the upper bound $\hat{x}_n$ as in \cite{xu2015demand}. Beyond the PoA, our analysis incorporates two additional indices for assessing market efficiency losses: the Lerner Index (LI) \cite{johari2011parameterized} and Deadweight Loss (DWL) \cite{daskin1991deadweight}. This approach allows for a thorough evaluation of market efficiency across various scenarios. The LI is commonly used to measure the price markup above competitive levels in oligopolies, from the individual perspective. The DWL evaluates the difference between market outcome and the social optimum from the whole perspective. We generate ten sets of consumers' parameters randomly for each scenario and subsequently calculate the average values of the market efficiency loss indices.   
\begin{figure}[ht]
\begin{center}
\includegraphics[width=0.45\textwidth]{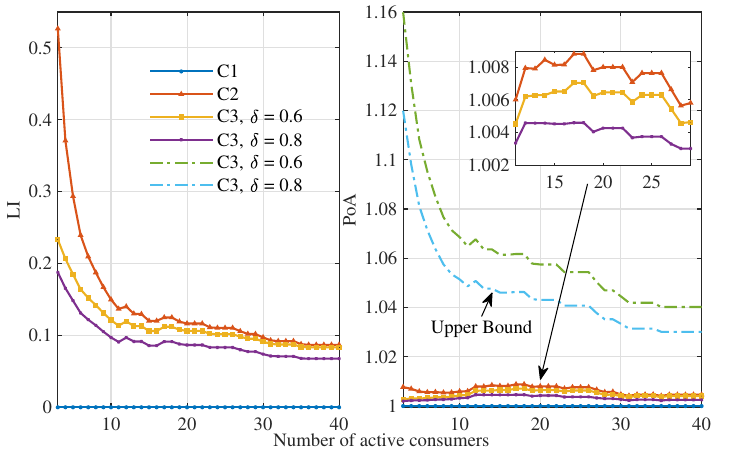}
\end{center}
\caption{The LI and PoA in Scenario 1}
\label{fig:simulation1}
\end{figure}

\begin{figure}[ht]
\begin{center}
\includegraphics[width=0.45\textwidth]{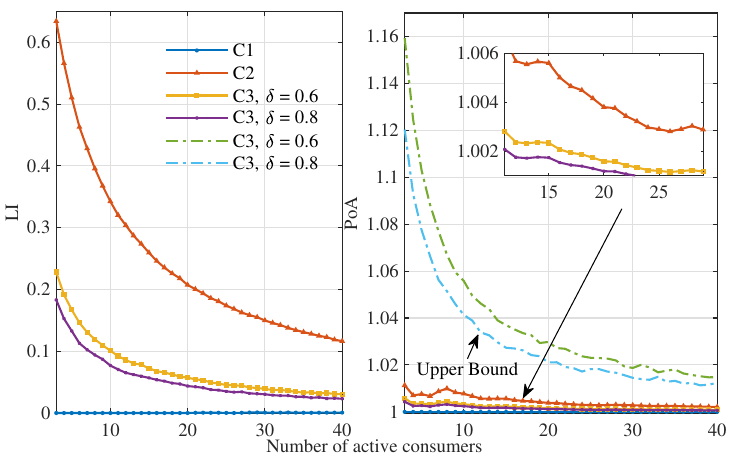}
\end{center}
\caption{The LI and PoA in Scenario 2}
\label{fig:simulation12}
\end{figure}

\begin{figure}[ht]
\begin{center}
\includegraphics[width=0.45\textwidth]{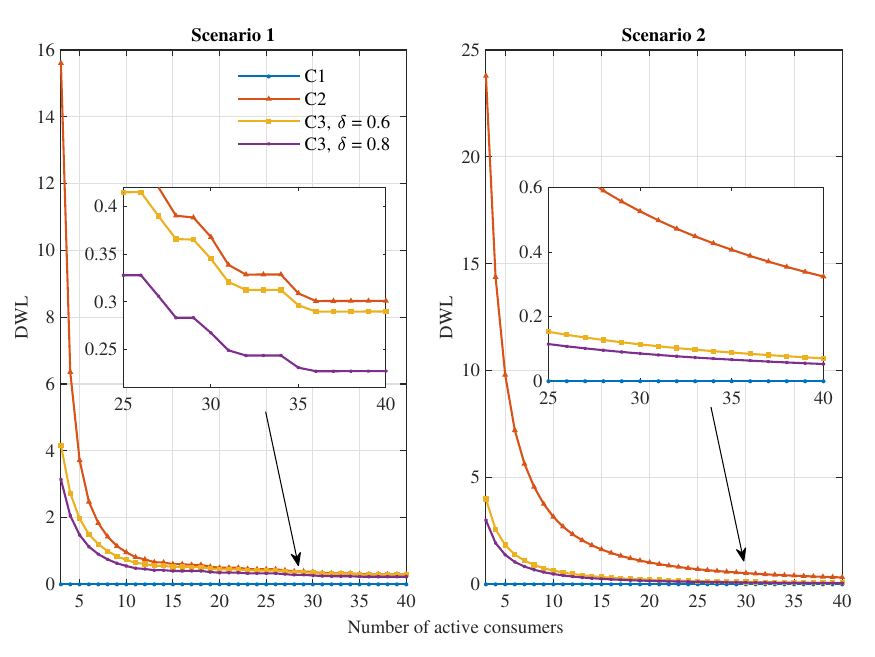}
\end{center}
\caption{The DWL in Scenario 1 and 2}
\label{fig:simulation13}
\end{figure}

Fig. \ref{fig:simulation1} and Fig. \ref{fig:simulation12} show the LI and PoA in Scenario 1 and Scenario 2, respectively for various numbers of active consumers. From the figures, it is evident that for Case 1, which is a perfectly competitive market, the LI remains constant at $0$ and the PoA at $1$, as expected, irrespective of the number of active consumers. 
In the subsequent three cases, the LI decreases as the number of active consumers rises. This trend is attributable to the diminishing market power of individual participants in an imperfectly competitive market, which naturally occurs as the number of market participants increases. 
Additionally, our analysis reveals that in both scenarios, the LI and PoA in Case 3, which showcases our proposed mechanism, approach the ideal benchmarks of zero and one, respectively.
In particular, in Scenario 2, where upper bounds on local flexibility constraints are considered, the LI in Case 2  is significantly higher than in Case 3. Specifically, in Scenario 1, the average LI and PoA in Case 2 are 22.15\% and 0.28\% higher than those in Case 3 ($\delta=0.6$), whereas in Scenario 2, these figures are 232\% and 0.31\%, respectively. This difference underscores the superior performance of our proposed market mechanism.
Bearing in mind that $\alpha$ scales linearly with respect to the parameter $\delta$, we observe that the proposed mechanism performs better in terms of market efficiency, as $\alpha$ becomes larger. Additionally, it is important to note that the actual PoA is substantially lower than its theoretical upper bound, as derived in Proposition \ref{poa}. 
These upper bounds are visually represented by the dashed lines in Fig. \ref{fig:simulation1} and \ref{fig:simulation12}.
Ultimately, we demonstrate the DWL in Figure \ref{fig:simulation13}. Clearly, the DWL shows patterns similar to the LI.

\subsection{Deceptive behaviors effect}\label{subsec:deceptive}
Here, we briefly examine what happens if a consumer deviates from 
Algorithm~\ref{alg1}. In particular, we model such a deceptive behavior by modifying the true cost function parameters $a_n$ or $b_n$ to $\hat a_n$ or $\hat b_n$, respectively.
We consider a group of 5 consumers where Consumer 1 shows deceptive behavior while the other four consumers remain truthful. Fig.~\ref{fig:simulation15} shows the profit of Consumer 1 when using deceptive cost parameters $\hat a_1$ or $\hat b_1$, across three distinct cases. Note that in all cases, the true cost parameters of Consumer 1 remain the same, whereas the true cost parameters for the remaining four consumers are chosen differently for each case. 
We observe from Fig.~\ref{fig:simulation15} that in Case 1, Consumer 1 achieves higher profits by using slightly larger $\hat{a}_1$ or $\hat{b}_1$. On the contrary, in Case 2 a slightly smaller $\hat a_1$ and in Case 3 the true value $\hat a_1=a_1$ is preferred. In both Case 2 and Case 3, adopting the true value $\hat b_1=b_1$ is more advantageous.   
Since  these cases are different in terms of the cost parameters of other consumers, we conclude that whether Consumer 1 will gain an extra profit or incur a profit loss as a result of using a fake $\hat{a}_1$ or $\hat{b}_1$ depends on the cost parameters of the other consumers. Due to privacy considerations, it is very difficult for the deceptive consumer to obtain the cost parameters of other consumers, and thus it cannot determine whether a larger or smaller $\hat{a}_1$ or $\hat{b}_1$ will guarantee a profit gain.  This weakens the motivation of the consumers to deviate from the nominal protocol.

\begin{figure}[ht]
\begin{center}
\includegraphics[width=0.45\textwidth]{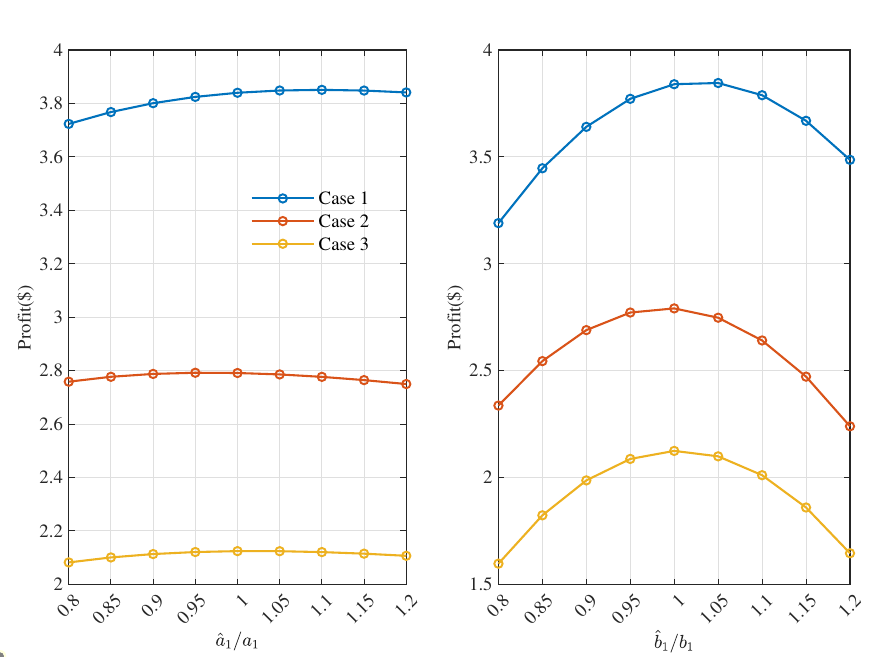}
\end{center}
\caption{The profit of Consumer 1 with deceptive value $\hat a_1/a_1$ (left) and $\hat b_1/ b_1$ (right)}
\label{fig:simulation15}
\end{figure}

\subsection{Security analysis} \label{subsec:SA}
Next, we highlight the importance of including AC power flow network constraints by comparing the results with and without including such constraints in our model. We consider 12 consumers providing balancing services for the utility company. For the sake of simplicity, we assume only one consumer is connected to each bus and we label each active consumer by the bus number it is connected to. We investigate two scenarios:

\noindent Scenario 1: There is an energy surplus. In this scenario, to consume the extra energy from the transmission system, an active consumer with flexible loads is rewarded to consume more energy.

\noindent Scenario 2: There is an energy deficit. In this scenario, an active consumer with distributed generators is getting paid to provide extra energy for the utility company.

Fig. \ref{fig:simulation2} shows the flexibility allocation and the corresponding voltage magnitudes in Scenario 1. We can see if the restriction on the voltage magnitudes is not considered, the voltage magnitude of Bus 30 drops below the (standard) lower bound $\underline{v}=0.95$. On the other hand, when network constraints are incorporated in the design,  Consumer 28, 31 and 33 provide less flexibility while Consumer 20, 22, 24 and 25 provide more, thereby avoiding the voltage collapse at Bus 30. The flexibility allocations and the corresponding power flows in Scenario 2 are shown in Fig. \ref{fig:simulation22}. These results illustrate that to avoid Line 17 (connecting buses 17 and 18) being congested, Consumer 18 should decrease its flexibility significantly from 7.41 $\mathrm{kWh}$ to 2.28 $\mathrm{kWh}$. In practice, this means active Consumer 18 cannot provide more flexibility since it has provided a large amount of energy in the day-ahead market. In general, incorporating the physical constraints might extensively alter the flexibility allocation results, and clearing the market without considering the physical constraints can jeopardize the system's secure operation.

\begin{figure}[ht]
\begin{center}
\includegraphics[width=0.48\textwidth]{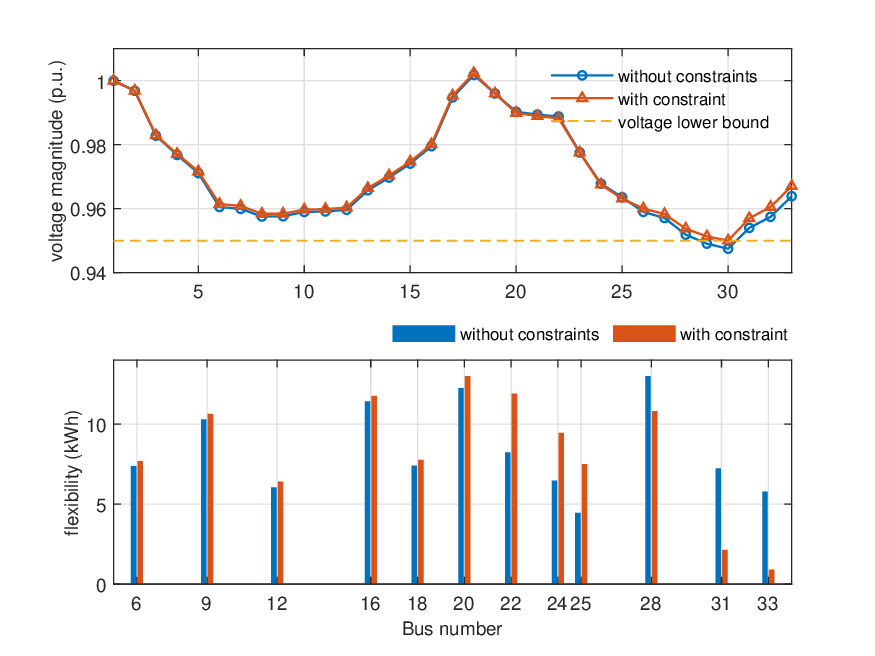}
\end{center}
\caption{The voltage magnitudes and allocated flexibility in Scenario 1}
\label{fig:simulation2}
\end{figure}

\begin{figure}[ht]
\begin{center}
\includegraphics[width=0.48\textwidth]{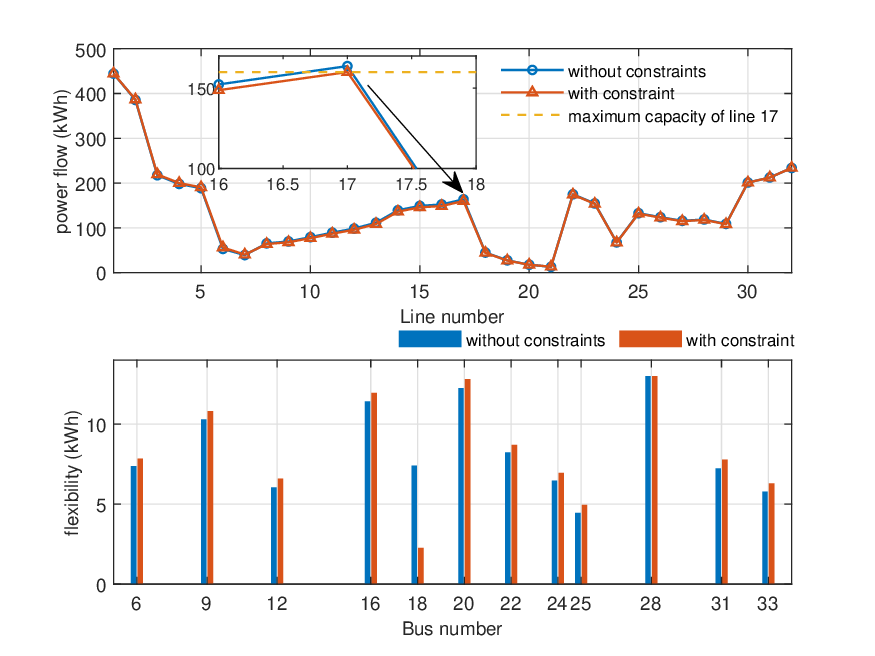}
\end{center}
\caption{The power flow and allocated flexibility in Scenario~2}
\label{fig:simulation22}
\end{figure}

\begin{figure}[ht]
\begin{center}
\includegraphics[width=0.48\textwidth]{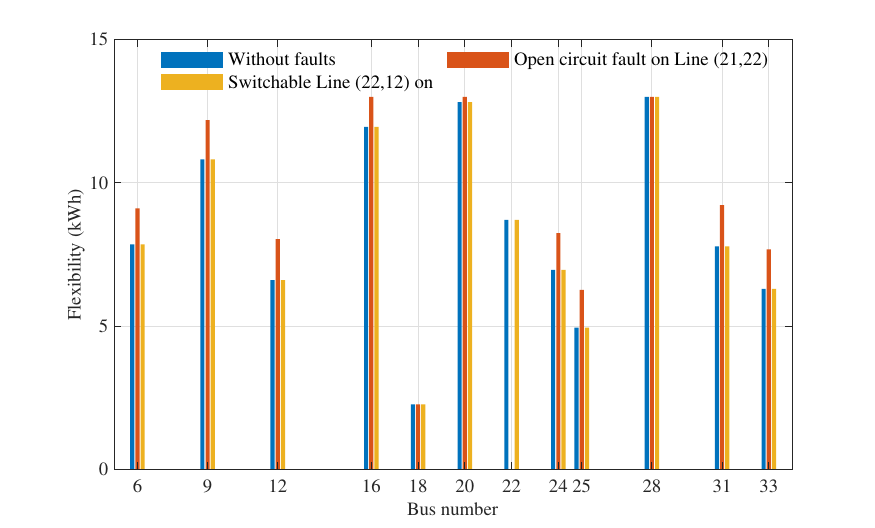}
\end{center}
\caption{The allocated flexibility with fault awareness}
\label{fig:simulation23}
\end{figure} 

Due to a fault occurring in the distribution system, the topology of the network could be changed resulting in different market clearing results. To further evaluate our proposed market mechanism, we consider a situation based on Scenario 2 above where an open circuit fault occurs on Line (21,22), and switchable Line (25,29) (embedded in the enhanced IEEE 33-bus system \cite{dolatabadi2020enhanced}) is connected as a response to this fault.
Fig.~\ref{fig:simulation23} shows the market clearing results under different situations. When there is no fault, the flexibility allocation is the same as the one in Fig.~\ref{fig:simulation22}. However, when Line (21,22) is open, Consumer 22 is not able to provide flexibility while other consumers need to provide more flexibility. When Line (25,29) is switched on as the backup line, it reconnects Consumer 22 to the network, allowing Consumer 22 to provide flexibility again. In practice, a DSO is aware of the faults occurring in the network and can change the power flow network constraints in the bid modification phase. Therefore, the market clearing results can satisfy the modified network constraints.

\subsection{Convergence properties of Algorithm \ref{alg1}}\label{subsec:convergence}
We use the same setup as in Scenario 2 of the previous subsection to show the transient performance of the algorithm.
To facilitate comparison, we assume all consumers adopt the same step sizes as $\rho_n=\bar{\rho}$ and $\nu_n=\bar{\nu}$. Hence, a larger $c$ means a larger step size for bid updates and a smaller step size for dual variable updates.
Fig.~\ref{fig:betagamma33large} and Fig.~\ref{fig:betagamma33small} show the evolution of the bids $\beta$ and the dual variables $\gamma$ under different step sizes, where the dual variables $\gamma$ are associated with the flexibility capacity constraints.  As illustrated in the figures, the algorithm converges within $400$ iterations for $c=0.4$ and requires only $150$ iterations for $c=0.8$. In addition, the dual variable corresponding to Consumer 28 takes positive values as soon as its allocated flexibility reaches its maximum available flexibility. Furthermore, as demonstrated in Fig. \ref{fig:simulation22}, line capacity constraints limit Consumer 18's strategy to $-6.21$, resulting in a low flexibility allocation for this consumer.

To demonstrate that this algorithm consistently converges to the v-GNE rather than any other fixed points, Fig. \ref{fig:simulation32} presents the evolution of the normalized errors of the flexibility profile, defined as $\|x^k-x^*\|^2/\|x^*\|^2$. 
Here, $x^*$ represents the flexibility allocation at the v-GNE for the game $\mathcal{G}$, which can be determined by solving the optimization problem \eqref{eq_optimization2}.
From the figure, it is evident that all error signals converge to zero following a brief transient phase, with a larger value of 
$c$ resulting in faster convergence. Therefore, selecting larger $c$ (larger step sizes for bid updates) can accelerate the convergence. 

\begin{figure}[ht]
\begin{center}
\includegraphics[width=0.48\textwidth]{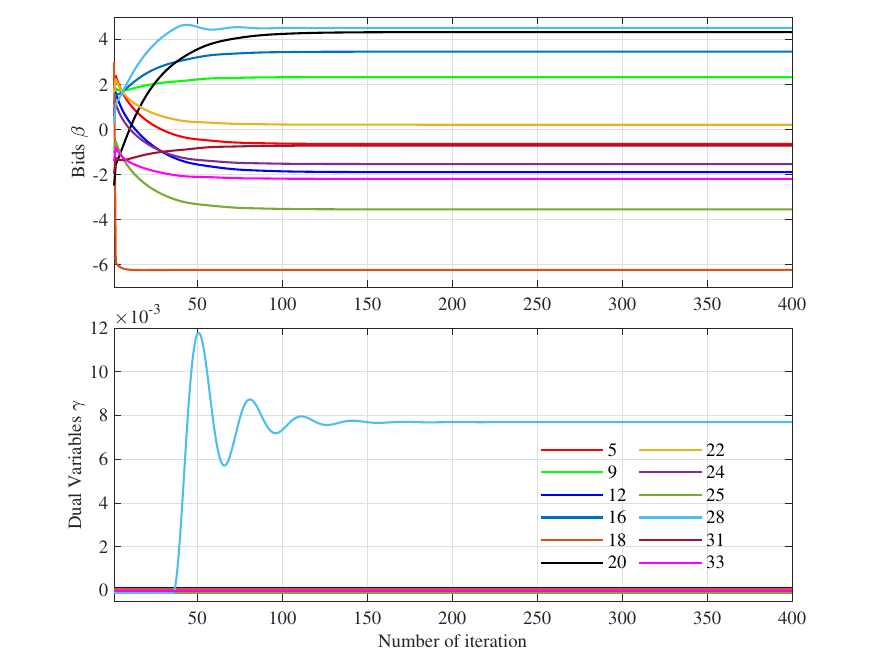}
\end{center}
\caption{The evolution of $\beta$ and $\gamma$ when $c=0.8$}
\label{fig:betagamma33large}
\end{figure}

\begin{figure}[ht]
\begin{center}
\includegraphics[width=0.48\textwidth]{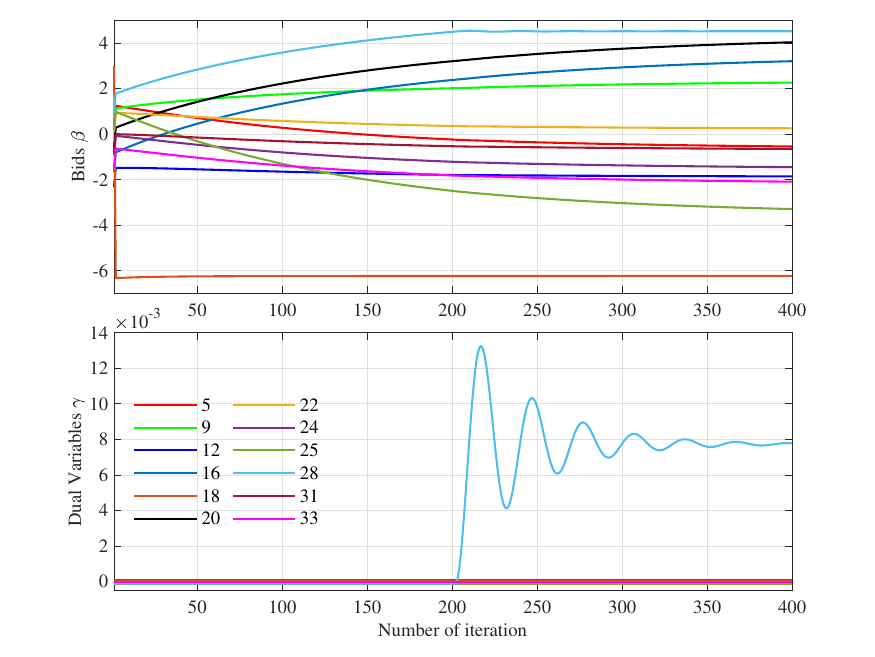}
\end{center}
\caption{The evolution of $\beta$ and $\gamma$ when $c=0.4$}
\label{fig:betagamma33small}
\end{figure}
\begin{figure}[ht]
\begin{center}
\includegraphics[width=0.48\textwidth]{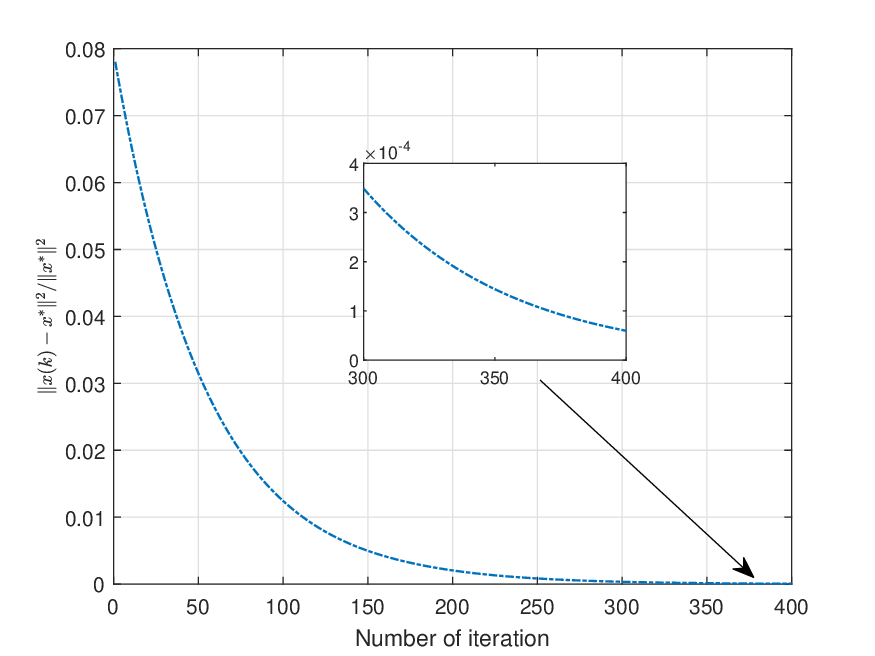}
\end{center}
\caption{The normalized errors of flexibility profile}
\label{fig:simulation32}
\end{figure}

Finally, we investigate the scalability of our algorithm by showing the required computational time for convergence. To this end, we implement our market clearing method for various numbers of active consumers on the IEEE 33-bus system and IEEE 69-bus system \cite{zimmerman2010matpower}. We consider a stopping criterion as 
\begin{equation}
\|\beta^{k+1}-\beta^k\|^2+\|\gamma^{k+1}-\gamma^k\|^2<10^{-5},
\end{equation} 
and assign active consumers to the buses of distribution arbitrarily. Furthermore, to show the impact of the computational complexity of the AC power flow model on the scalability of the proposed method, we additionally employ the SOCRD model to represent the distribution network \cite{farivar2013branch}.  Figure \ref{fig:computationtime} demonstrates that the computational time required scales approximately linearly with the number of active consumers in each network and under each power flow model. For both networks, the computation time fits well within the clearance timescale of the market (typically around 5 minutes). In addition, the average computational time on the IEEE 69-bus system is 87.77\% higher than that on IEEE 33-bus system. It is also worth mentioning that our method targets imperfectly competitive markets where the number of participants is limited. Furthermore, employing the SOCRD power flow model requires averagely 2.67 times the computational time compared to the linear power flow model \eqref{eq_line_balance}. 
\begin{figure}[ht]
\begin{center}
\includegraphics[width=0.48\textwidth]{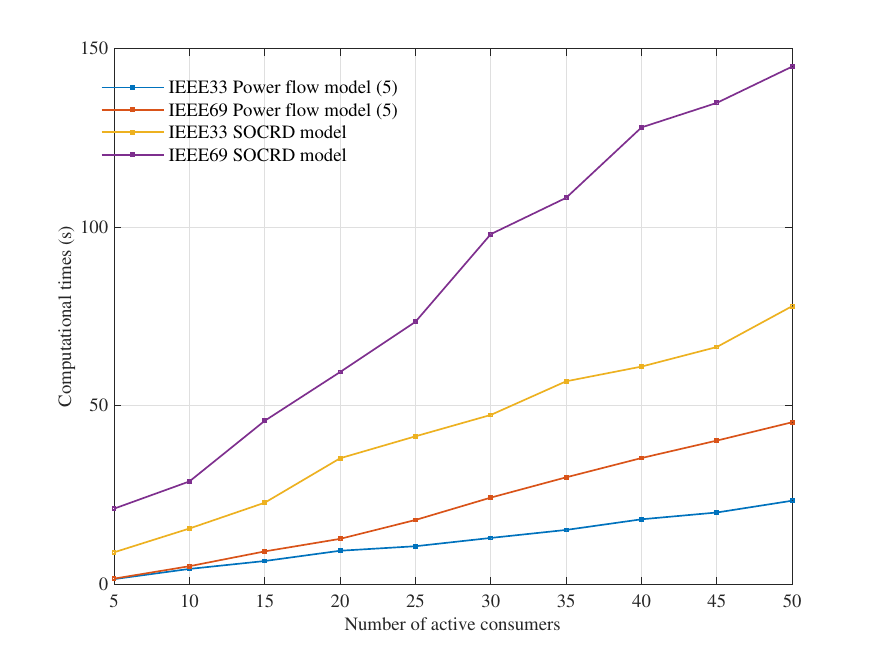}
\end{center}
\caption{The computation time}
\label{fig:computationtime}
\end{figure}

\section{Conclusion}
We have presented a demand response game for procuring energy balancing services within the distribution network and proposed a scalable decentralized algorithm to solve the formulated game. The numerical experiments have shown that this algorithm leads the system to the v-GNE, where the market efficiency loss is small and the physical network constraints are satisfied. Future research avenues include incorporating uncertainties into the model, such as the uncertain maximum flexibility of consumers, as well as devising mechanisms for achieving social welfare maximization through efficient profit sharing. Furthermore, there is a need to address the challenges posed by non-continuous cost functions.

\appendix
\begin{proof}[Proof of Lemma~\ref{monolip}]
The vector of allocated flexibility $x=\col(x_n)_{n\in\mathcal{N}}$,  derived from \eqref{eq_flexibility_allocation}, can  be compactly written as
\begin{equation}\label{eq_flexibility_allocation_compact}
 x=A\beta+b,
\end{equation}
where $A= I- \frac{1}{N} \mathbbb{1}\mathbbb{1}^{\top}$ and $b=\frac{\xf}{N}\mathbbb{1}$.
We first write the mapping $F$ in a vector form as
    \begin{equation}
        F(\beta)=\frac{N-1}{N}C'(x)+A_F\beta+b_F,
    \end{equation}
where $C'(x)=\col(C_n'(x_n))_{n\in\mathcal{N}}$, $b_F=-\frac{\xf(N-2)}{\alpha N^2}\mathbbb{1}$,  $A_F={\frac{1}{\alpha N^2}\big( NI+(N-2)\mathbbb{1}\mathbbb{1}^{\top} \big)}$ and $x=A\beta+b$ as in \eqref{eq_flexibility_allocation_compact}. Note that $\lambda_{\min} (A_F)=\frac{1}{\alpha N}$, $\lambda_{\max} (A_F)=\frac{N-1}{\alpha N}$ and $\lambda_{\max}(A)=1$.

\noindent    i) The Jacobian matrix of the mapping $F$ is given by 
\begin{equation}
\nabla F(\beta)= \frac{N-1}{N}C''(x)A +A_F,
\end{equation}
where $C''(x)=\diag(C''_n(x_n))_{n\in\mathcal{N}}$ with $C''_n=\nabla C'_n$. Based on \cite{boyd2016primer}, The mapping $F$ is strongly monotone with the constant $\eta_F>0$ if and only if $E(x)+A_F\succeq \eta_F I$, where 
\begin{equation}
 E(x)= \frac{N-1}{N}\frac{C''(x)A+AC''(x)}{2}.   
\end{equation}
Note that elements of the matrix $E(x)$ can be stated as
\begin{equation}
E_{nm}(x)=\frac{N-1}{2N^2}
\begin{cases}
2(N-1)C''_n(x_n) \quad &\mathrm{if} \ n=m, \\
-C''_n(x_n)-C''_m(x_m) \quad &\mathrm{if} \ n\neq m.
\end{cases}    
\end{equation}

By exploiting the structure of $E(x)$ and leveraging the Gershgorin circle theorem \cite{bernstein2009matrix}, we have
\begin{equation}
\lambda_{\min}(E(x)) \ge \frac{N-1}{2N} \big(C''_n(x_n) -\frac{\sum_{m=1}^{N}C''_m(x_m)}{N} \big),
\end{equation}
for all $n \in \mathcal{N}$. Due to the convexity and Lipschtiz continuity in Assumption~\ref{assumption}, if  $\alpha<\frac{2}{\kappa(N-1)}$, we have
\begin{equation}
  \begin{aligned}
    \lambda_{\min} (E(x)+A_F)  & \geq \lambda_{\min} (E(x)) + \lambda_{\min} (A_F) \\
    &  \geq -\frac{N-1}{2N} \kappa+\frac{1}{\alpha N}
\end{aligned}  
\end{equation}
which implies the constant $\eta_F=\frac{1}{\alpha N}-\frac{\kappa(N-1)}{2N}$.

\noindent    ii) Let $\tilde{x}=A\tilde{\beta}+b$ and $\bar{x}=A\bar{\beta}+b$ for any $\tilde{\beta},\bar{\beta} \in K$.
    Then, we can write the following 
\begin{equation}
    \begin{aligned}
        &\lVert F(\tilde{\beta})-F(\bar{\beta})\rVert \leq \lVert (C'(\tilde{x})-C'(\bar{x}))\rVert+\lVert A_F(\tilde{\beta}-\bar{\beta})\rVert \\ \leq 
        & \frac{N-1}{N}\kappa \lVert A(\tilde{\beta}-\bar{\beta}) \rVert + \lVert A_F( \tilde{\beta}-\bar{\beta}) \rVert \\ \leq
        & \frac{N-1}{N}\kappa \lambda_{\max}(A) \lVert \tilde{\beta}-\bar{\beta} \rVert + \lambda_{\max}(A_F)\lVert \tilde{\beta}-\bar{\beta} \rVert\\
        =& \frac{N-1}{N}(\kappa+ \frac{1}{\alpha})\lVert\tilde{\beta}-\bar{\beta} \rVert,    
    \end{aligned}
\end{equation}
for any $\tilde{\beta},\bar{\beta} \in K$. This gives $\kappa_F=\frac{N-1}{N}(\kappa+ \frac{1}{\alpha})$.
\end{proof}
\begin{proof}[Proof of Lemma \ref{equivalence}]
We write the feasible set of the optimization problem \eqref{eq_optimization2} as $\Theta=\{x \mid \eqref{eq_constraint_original},\,\eqref{eq_flexibility_balance} ,\,\eqref{eq_bus_2} \, \mathrm{and} \, \eqref{eq_pf_compact}\, \mathrm{hold}\}$. By \cite[Theorem~3.34]{ruszczynski2011nonlinear}, a feasible point $\bar x$ is an optimizer of \eqref{eq_optimization2} if and only if
\begin{equation} \label{ineq_NC}
    (x-\bar x)^{\top} \nabla_{x} (\sum_{n\in\mathcal{N}} D_n(\bar x_n)) \geq \mathbbb{0}, \ \ \forall \, x \, \in \Theta. 
\end{equation}
Now, we show that $\bar x=A\beta^*+b=x^*$ satisfies the inequality above. By using $\lambda^*=(\xf-\mathbbb{1}^{\top}\beta^*)/(\alpha N)$ and noting \eqref{eq_pgm},  we can write $f_n(\beta^*)$ as
\begin{equation}
    f_n(\beta^*)=\frac{N-1}{N}\left(\nabla_{{x}_n} C_n( x_n^*) + \frac{x^*_n}{\alpha(N-1)}-\lambda^*\right).
\end{equation}
Hence, the VI \eqref{eq_vi} can be written as
\begin{equation}\label{eq_optimality2}
\frac{N-1}{N}(\beta-\beta^*)^\top \Big( \nabla_{x} (\sum_{n\in\mathcal{N}} D_n( x_n^*))-\lambda^*\mathbbb{1} \Big) \geq 0   \quad \forall \, \beta \in K.  
\end{equation}
Due to the affine relation in \eqref{eq_flexibility_allocation_compact}, $(\beta^*+\mathbbb{1})$ and $(\beta^*-\mathbbb{1})$ 
are both in $K$. By substituting these two vectors in \eqref{eq_optimality2} independently, we get
\begin{equation}\label{eq_sum_zero}
\mathbbb{1}^\top(\sum_{n\in\mathcal{N}} \nabla_{x}(D_n(x^*_n))-\lambda^* \mathbbb{1} \Big) = 0.
\end{equation}
Note that for any $x\in \Theta$, we can always find $\beta \in K$ such that $x=A\beta+b$. Then, for any $x\in \Theta$, we have
 \begin{equation}
  \begin{aligned}
&(x-x^*)^\top \nabla_{x} (\sum_{n\in\mathcal{N}} D_n( x_n^*))\\ 
 =&(x-x^*)^\top\Big( \nabla_{x} (\sum_{n\in\mathcal{N}} D_n( x_n^*))-\lambda^*\mathbbb{1} \Big)\\
 =& (\beta-\beta^*)^\top  \Big( \nabla_{x} (\sum_{n\in\mathcal{N}} D_n( x_n^*))-\lambda^*\mathbbb{1} \Big)\\
 -& \frac{1}{N}(\beta-\beta^*)^\top\mathbbb{1}\mathbbb{1}^\top \Big( \nabla_{x} (\sum_{n\in\mathcal{N}} D_n( x_n^*))-\lambda^*\mathbbb{1} \Big) \\
 =& (\beta-\beta^*)^\top\Big( \nabla_{x} (\sum_{n\in\mathcal{N}} D_n( x_n^*))-\lambda^*\mathbbb{1} \Big)\geq 0
 \end{aligned}   
 \end{equation}
  where the first and third equalities come from \eqref{eq_flexibility_balance} and \eqref{eq_sum_zero} respectively. The above inequality implies $x^*$ is an optimizer of \eqref{eq_optimization2} and its uniqueness follows from the strict convexity of the objective function in \eqref{eq_optimization2}.
\end{proof}

\begin{proof}[Proof of Proposition~\ref{convergence}]   
First, we write our algorithm in a compact form. Bearing in mind \eqref{eq_price} and \eqref{eq_flexibility_allocation}, we can rewrite \eqref{eq_update_beta} and \eqref{eq_update_gamma} as
\begin{equation}
\begin{aligned}
&\tilde{\beta}_n^{k+1} = \beta_n^k - \rho_n\big (f_n(\beta_n^k,\beta_{-n}^k )-1/{N} \mathbbb{1}^{\top} \gamma^k + \gamma_n^k\big), 
\end{aligned}
\end{equation}
\begin{equation}
\begin{aligned}
    \gamma_n^{k+1} & = \proj_{\mathbb{R}^+} \big( \gamma_n^k  + \nu_n \big(2\mathbbb{1}^{\top}(\beta^k-\beta^{k+1})/N\\
     & +(2\beta_n^{k+1}-\beta_n^k)-\hat{x}_n+\xf/N\big) \big).
\end{aligned}
\end{equation} 
By letting  $R=\diag(\rho_n)_{n\in\mathcal{N}}$, $V=\diag(\nu_n)_{n\in\mathcal{N}}$ and noting that $\proj_{\Psi}(\tilde{\beta})=\argmin_{z\in \Psi} ||z-\tilde{\beta}||$, the above dynamics can be written as
\begin{equation}\label{eq_compact}
  \begin{aligned}
&\beta^{k+1}=\proj_{\Psi}\big(\beta^k-R(F(\beta^k)+A^\top\gamma^k)\big),\\
&\gamma^{k+1}=\proj_{\mathbb{R}^+_{N}}\big(\gamma^k+V(2A\beta^{k+1}-\hat{b})\big),
\end{aligned}  
\end{equation}
where $A= I- \frac{1}{N} \mathbbb{1}\mathbbb{1}^{\top}$ and $\hat{b}=\col(x_n-\xf/N)_{n\in\mathcal{N}}$. The above dynamics then have the similar structure as the preconditioned forward-backward algorithm in \cite{belgioioso2018projected}. Furthermore, the compact algorithm \eqref{eq_compact} is essentially the Banach-Picard iteration as,
\begin{equation}\label{eq_iteration}
    \omega^{k+1}=\mathcal{BP}(\omega^k)=(\mathrm{Id}+\Phi^{-1}\mathcal{B})^{-1}\circ(\mathrm{Id}-\Phi^{-1}\mathcal{A})\omega^k
\end{equation}
where $\mathrm{Id}$ is the identity mapping and $\omega=\col(\beta,\gamma)$. The mappings $\mathcal{A}$, $\mathcal{B}$ and the matrix $\Phi$ are defined as
\begin{equation}
   \mathcal{A}=\begin{bmatrix}
F(\beta)\\
\hat{b}
\end{bmatrix}, \mathcal{B}=\begin{bmatrix}
\N_{\Psi}(\beta)+A^{\top}\gamma\\
\N_{\mathbb{R}^+_N}(\gamma)-A\beta
\end{bmatrix},\Phi=\begin{bmatrix}
R^{-1}& -A^{\top}\\
- A & V^{-1}
\end{bmatrix}
\end{equation}
where $\N$ is the normal cone operator. Note that $\Psi$ is closed and convex, the mapping $F$ is $\eta_F$-monotone and $\kappa_F$-Lipschitz continuous by Lemma \ref{monolip}, and $\lambda_{\max}(A)=1$. 

Following \cite[Chapter 10.1]{facchinei2003finite}, we write the Karush–Kuhn–Tucker (KKT) condition for the VI problem in Proposition \ref{vGNE} as,
\begin{equation}
 \begin{aligned}
    & 0\in \N_{\Psi}(\beta^*)+F(\beta^*)+ A^{\top} \gamma^* \\
    & 0 \in \N_{\mathbb{R}^+_N}(\gamma^*)-(A\beta^*-\hat{b})
\end{aligned}   
\end{equation}
where the solution $\beta^*$ is the v-GNE of the game $G$ and the solution $\gamma^*$ is the dual variables. Note that the solution of this KKT condition is exactly the zero of the mapping $\mathcal{A}+\mathcal{B}$. Therefore, by \cite[Theorem~1]{belgioioso2018projected}, if the step sizes $\rho_n$ and $\nu_n$ satisfy \eqref{eq_step_sizes}, then the sequence $(\beta^k, \gamma^k)$ generated by Algorithm \ref{alg1} converges to $(\beta^*, \gamma^*)$.

Finally, we investigate the convergence rate of the iteration \eqref{eq_iteration} to the fix point $\omega^*=\col(\beta^*, \gamma^*)$. According to \cite[Lemma~5]{belgioioso2018projected}, if the step sizes $\rho_n$ and $\nu_n$ satisfy \eqref{eq_step_sizes}, the iteration \eqref{eq_iteration} is $\xi$-averaged with $\xi\in (0,1)$, that is
\begin{equation}
\begin{aligned}
    &\|\mathcal{BP}(\omega^k)-\mathcal{BP}(\omega^*)\|^2  \leq  \|\omega^k-\omega^*\|^2 \\  -&\frac{1-\xi}{\xi}  \|(\mathrm{Id}-\mathcal{BP})(\omega^k)-(\mathrm{Id}-\mathcal{BP})(\omega^*)\|^2.  
\end{aligned}
\end{equation}
Due to the fact that $\omega^{k+1}=\mathcal{BP}(\omega^k)$ and $\omega^*=\mathcal{BP}(\omega^*)$, the above inequality implies
\begin{equation}\label{eq_average}
    \|\omega^{k+1}-\omega^*\|^2  \leq  \|\omega^k-\omega^*\|^2   -\frac{1-\xi}{\xi}  \|\omega^{k+1}-\omega^k\|^2,  
\end{equation}
and $\|\omega^{k+1}-\omega^*\|^2  \leq  \|\omega^k-\omega^*\|^2$, which means $\omega^k$ sublinearly converges. More specifically, we sum up \eqref{eq_average} from $k=0$ to any $k>0$, we obtain, 
\begin{equation}
    \frac{1}{k} \sum^k_{j=0}\|\omega^{j+1}-\omega^j\|^2  \leq  \frac{\xi}{k(1-\xi)}( \|\omega^0-\omega^*\|^2 -\|\omega^{k+1}-\omega^*\|^2),
\end{equation}
which implies
\begin{equation}
    \min_{j=0,1,...,k} \|\omega^{j+1}-\omega^j\|^2  \leq  \frac{\xi}{k(1-\xi)}\|\omega^0-\omega^*\|^2=\mathcal{O}(1/k).
\end{equation}
Overall, we conclude that the sequence $(\beta^k, \gamma^k)$ sublinearly converges to $(\beta^*, \gamma^*)$ with convergence rate in \eqref{eq_rate}.
\end{proof} 

\bibliographystyle{IEEEtran}
\bibliography{ref}

\end{document}